%
%
%
%
%
\RequirePackage{fix-cm}
\documentclass[smallextended]{svjour3}       
\smartqed  
%
%
%
%
%
%

\usepackage{amssymb}
\usepackage[latin1]{inputenc}
\expandafter\let\csname equation*\endcsname\relax   
\expandafter\let\csname endequation*\endcsname\relax   
\usepackage[reqno]{amsmath}
\usepackage{amstext}
\usepackage{enumerate}
\usepackage{ifthen}
\usepackage{lineno}
\usepackage{tabularx}
\usepackage{booktabs}
\usepackage{multirow}
\usepackage{rotating}
\usepackage[numbers,square,sort&compress]{natbib}

\usepackage{color}

\usepackage[hyphens]{url}
\usepackage{hyperref}
\hypersetup{
  pdftitle={Attacks on quantum key distribution protocols that employ non-ITS authentication},
  pdfsubject={pdfsubject},
  pdfproducer={pdfproducer},
  pdfauthor={Christoph Pacher}
}

\newcommand{\Notation}{\emph{Notation}.\ }
\newcommand{\Setup}{\emph{Setup}.\ }
\newcommand{\Messages}{\emph{Protocol messages}.\ }
\newcommand{\Attackmessages}{\emph{Protocol messages and messages inserted by Eve (marked by ')}.\ }
\newcommand{\Actions}{\emph{Protocol actions}.\ }
\newcommand{\Attackactions}{\emph{Protocol and attack actions}.\ }
\newcommand{\Alice}{\textbf{A}}
\newcommand{\Bob}{\textbf{B}}
\newcommand{\Eve}{\textbf{E}}
\newcommand{\common}[2]{#1=#2}
\newcommand{\Info}{d}
\newcommand{\Bases}{b}
\newcommand{\tildeBases}{\tilde{\Bases}}
\newcommand{\Siftkey}{s}
\newcommand{\tildeSiftkey}{\tilde{\Siftkey}}
\newcommand{\Key}{K}
\newcommand{\state}{\rho}

\newcommand{\IA}{\Info^\Alice}
\newcommand{\IB}{\Info^\Bob}
\newcommand{\IE}{\Info^\Eve}
\newcommand{\BA}{\Bases^\Alice}
\newcommand{\BB}{\Bases^\Bob}

\newcommand{\BE}{\Bases^\Eve}
\newcommand{\tildeBE}{\tildeBases^\Eve}

\newcommand{\SA}{\Siftkey^\Alice}
\newcommand{\hatSA}{\hat{\Siftkey}^\Alice}
\newcommand{\SB}{{\Siftkey}^\Bob}
\newcommand{\hatSB}{\hat{\Siftkey}^\Bob}
\newcommand{\SE}{\Siftkey^\Eve}
\newcommand{\SEA}{\Siftkey^{\Eve \leftrightarrow \Alice}}
\newcommand{\hatSEA}{\hat{\Siftkey}^{\Eve \leftrightarrow \Alice}}
\newcommand{\SEB}{\Siftkey^{\Eve \leftrightarrow \Bob}}
\newcommand{\hatSEB}{\hat{\Siftkey}^{\Eve \leftrightarrow \Bob}}
\newcommand{\tildeSEB}{\tildeSiftkey^{\Eve \leftrightarrow \Bob}}
\newcommand{\KA}{\Key^\Alice}
\newcommand{\KB}{\Key^\Bob}
\newcommand{\KE}{\Key^\Eve}
\newcommand{\KEA}{\Key^{\Eve \leftrightarrow \Alice}}
\newcommand{\KEB}{\Key^{\Eve \leftrightarrow \Bob}}
\newcommand{\sA}{\state^\Alice}
\newcommand{\sE}{\state^\Eve}
\newcommand{\BAB}{{\Bases}^{\common{\Alice}{\Bob}}}
\newcommand{\BEB}{{\Bases}^{\common{\Eve}{\Bob}}}
\newcommand{\tildeBEB}{{\tildeBases}^{\common{\Eve}{\Bob}}}
\newcommand{\BAE}{{\Bases}^{\common{\Alice}{\Eve}}}
\newcommand{\tildeBAE}{{\tildeBases}^{\common{\Alice}{\Eve}}}

\newcommand{\myempty}{\text{\emph{empty}}}

\newcommand{\nonce}{n}
\newcommand{\nB}{\nonce^\Bob}
\newcommand{\nE}{\nonce^\Eve}

\newcommand{\mA}{m^\Alice}
\newcommand{\mE}{m^\Eve}
\newcommand{\tildemE}{\tilde{m}^\Eve}

\newcommand{\TA}{T^\Alice}
\newcommand{\TE}{T^\Eve}

\newcommand{\qchan}{\mathcal{Q}}
\newcommand{\cchan}{\mathcal{C}}

\newcommand{\mack}{\textrm{m}_\textrm{ack}}

\newcommand{\MAC}[2]{g_{#1}(#2)}
\newcommand{\msg}[2][\empty]{
	\ifthenelse{\equal{#1}{\empty}}
		{#2}
		{#2,\MAC{#1}{#2}}
}

\newcommand{\ECCode}{EC}

\newcommand{\COCode}{CO}
\newcommand{\errorr}{\epsilon}
\newcommand{\fail}{\text{\emph{fail}}}

\newcommand{\PACode}{P}

\newcommand{\PA}{\PACode^\Alice}
\newcommand{\PE}{\PACode^\Eve}

\newcommand{\mS}{\mathcal{S}}
\newcommand{\mG}{\mathcal{G}}
\newcommand{\mZ}{\mathcal{Z}}
\newcommand{\mF}{\mathcal{F}}
\newcommand{\mH}{\mathcal{H}}
\newcommand{\mM}{\mathcal{M}}
\newcommand{\mT}{\mathcal{T}}

\newcommand{\affAIT}{1}
\newcommand{\affLiU}{2}
\newcommand{\affIQOQI}{3}
\newcommand{\affVCQ}{4}

\setcounter{tocdepth}{2}

\newcommand{\br}{\hline}
\newcommand{\mr}{\hline}

\begin{document}

\title{Attacks on quantum key distribution protocols that employ non-ITS authentication}
%

\titlerunning{Attacks on QKD protocols that employ non-ITS authentication}        

\author{C~Pacher$^{\affAIT}$     \and
        A~Abidin$^{\affLiU}$     \and
        T~Lor\"unser$^{\affAIT}$ \and 
        M~Peev$^{\affAIT}$       \and
        R~Ursin$^{\affIQOQI}$    \and
        A~Zeilinger$^{\affIQOQI,\affVCQ}$\and
        J-{\AA}~Larsson$^{\affLiU}$}

\authorrunning{C~Pacher et al.}

\institute{$^{\affAIT}$ C~Pacher (\email{christoph.pacher@ait.ac.at}), T~Lor\"unser, and M~Peev \at Digital Safety \& Security Department, AIT Austrian Institute of Technology, Austria 
\and
$^{\affLiU}$ A~Abidin and J-{\AA}~Larsson \at Department of Electrical Engineering, Link\"oping University, Link\"oping, Sweden
\and
%
%
$^{\affIQOQI}$ R~Ursin and A~Zeilinger \at Institute for Quantum Optics and Quantum Information, Austrian Academy of Sciences, Austria
\and
$^{\affVCQ}$ A~Zeilinger \at Vienna Center for Quantum Science and Technology (VCQ), Faculty of Physics, University of Vienna, Austria.}


\maketitle

\begin{abstract}
We demonstrate how adversaries with large computing resources can break Quantum Key Distribution (QKD) protocols which employ a particular message authentication code suggested previously. This authentication code, featuring low key consumption,  is not Information-Theoretically Secure (ITS) since for each message the eavesdropper has intercepted she is able to send a different message from a set of messages that she can calculate by finding collisions of a cryptographic hash function. However, when this authentication code was introduced it was shown to prevent straightforward Man-In-The-Middle (MITM) attacks against QKD protocols.

In this paper, we prove that the set of messages that collide with any given message under this authentication code contains with high probability a message that has small Hamming distance to any other given message. Based on this fact we present extended MITM attacks against different versions of BB84 QKD protocols using the addressed authentication code; for three protocols we describe every single action taken by the adversary. For all protocols the adversary can obtain complete knowledge of the key, and for most protocols her success probability in doing so approaches unity. Since the attacks work against all authentication methods which allow to calculate colliding messages, the underlying building blocks of the presented attacks expose the potential pitfalls arising as a consequence of non-ITS authentication in QKD-postprocessing. We propose countermeasures, increasing the eavesdroppers demand for computational power, and also prove necessary and sufficient  conditions for upgrading the discussed authentication code to the ITS level.
\keywords{Quantum Key Distribution \and Information-Theoretic Security \and Message Authentication \and Collision Attacks \and Man-in-the-Middle attack}
\end{abstract}

\section{Introduction}

Quantum key distribution (QKD) is a cryptographic key-agreement protocol.
The principles of QKD and particularly the very first proposal 
(Bennett-Brassard-84 or BB84 \cite{BB84}) are well-known. Still we prefer to start with a 
short outline  in order to provide the reader with a very general picture leaving more  
in-depth descriptions for the technical part of the paper (see  Sections~\ref{sec:prot1}--\ref{sec:prot3}). 

Any version of QKD consists of two phases: quantum communication (and measurements), and classical
post processing. The quantum communication phase typically uses states of
light. In the original approach by Bennett and Brassard polarized photons have been put forward to 
this end.  In the case of BB84 the quantum
communication phase starts with the generation of a random bitsequence, by one of the
users of the system (Alice). This bitsequence is to be used as raw material for the
key. Alice encodes it at random into either a horizontal-vertical polarization basis, e.g. 
horizontal -- 1 and vertical -- 0,  or analogously  into a $\pm45^\circ$ polarization basis. The other user (Bob) 
decodes the bits, selecting randomly either horizontal-vertical polarization or $\pm45^\circ$
polarization as a measurement basis. In case of perfect communication, the bits of Alice and Bob would agree whenever they have chosen identical coding and decoding bases. An eavesdropper (Eve) would need to guess what coding basis is used
(in the simplest \emph{intercept-resend} attack) to intercept Alice's signal in a measurement basis and resend the result to Bob in the same basis. Therefore whenever Alice's and Bob's encoding/decoding choices 
coincide, there is a 50\% chance that Eve guesses wrongly. Thus as consequence of the quantum \emph{uncertainty
  relation} for polarization  she will destroy the information potentially shared by Alice and Bob and render the bitstings of the two-parties uncorrelated.

After the quantum communication phase has ended, Alice and Bob start the
classical post processing through a process called \emph{sifting}. They
compare what encoding/decoding bases they used for each bit, and keep only those bit
values for which these bases match. This will remove 50\% of the raw
key bits on average. The next step is to perform \emph{reconciliation},
or error-correction, so that channel noise can be removed. This step will also
indicate if there is an eavesdropper on the channel. Typically, the users have
agreed on an acceptable noise level, and abort the protocol if the noise level
is too high. Whether Alice's and Bob's keys have been successfully reconciled is determined in the \emph{confirmation} step. 
A subsequent step is \emph{privacy amplification}, sacrificing some of
the newly established key material to decrease exponentially Eve's knowledge on the final
key, the assumption being that any channel noise is due to (some moderate level of) 
eavesdropping by Eve. The final post-processing step is \emph{authentication} to verify the
origin and correctness of the classical communication messages, used in the post processing phase. 
This sequence may differ in
detail and order in different QKD protocols, but in principle all
separate steps need to be  present
in one way or another.

The authentication step is the focus of this paper; the outstanding property
of QKD is that it is an Information-Theoretically Secure (ITS) and universally
composable (UC) protocol given that its classical communication is performed
over an authentic channel (note that \emph{all} key-agreement protocols are
insecure over non-authentic channels).  Thus QKD is a very powerful
cryptographic primitive but in order to be useful for practical key agreement
purposes it must be \emph{composed} with an independent primitive enforcing
the mentioned requirement for authenticity of classical communication.

The standard cryptographic approach ensuring authenticity of
communication messages against malicious attackers is to use a
message authentication code (MAC) \cite{MenezesOV96}.  A convenient
class of MACs are \emph{systematic} MACs which replace the original
message with a concatenation of the message itself and an additional
tag which is the image of a keyed hash function applied to the
message.  It is well-known that Strongly Universal$_2$ (SU$_2$), and
more generally Almost Strongly Universal$_2$ (ASU$_2$) hashing
(see \ref{uhf}) is an ITS primitive that can be used to calculate
systematic MAC tags.

\subsection{Related work}
Very recently authentication based on ASU$_2$ hashing was explicitly
shown \cite{Portmann2012} to be
also UC (a fact that has been used implicitly for quite some time).
Therefore UC message authentication with ASU$_2$ hashing can be
composed with UC quantum key distribution over authentic channels to
form a UC (quantum-classical) key agreement protocol over
non-authentic channels.
Thus, ASU$_2$ hashing is sufficient for the authentication of the
classical messages exchanged during any QKD protocol. However,
although composing two UC primitives is \emph{sufficient} for getting
a UC composed protocol this is not a priori \emph{necessary} as
in principle it is not excluded that it can be shown directly that the final protocol is UC.
In this sense it might still be possible that QKD over
non-authentic channels can be made secure without relying on ASU$_2$
hashing.  Alternatives using weaker authentication have been proposed,
and this paper focuses on the method of Ref.~\cite{Peev2005}, that
puts forward a hash function which is a composition of an (inner) known
public hash function (like SHA) and an (outer) SU$_2$ function.  It
was proven that QKD using this authentication is secure against an
eavesdropper that attempts to break the protocol using a
straightforward "man-in-the-middle" (MITM) attack, as defined below.
Later, in Refs.~\cite{BMQ05,Abidin2009} it was observed that an
eavesdropper can apply more advanced strategies than a straightforward
MITM and get a significant leverage by being able to break QKD with
particular realizations of post-processing. It has, however, been
argued \cite{Peev2005,Peev2009} that this weakness occurs only in
specific post processing realizations, while in practical (or generic)
ones the proposed eavesdropping techniques remain inadequate.

\subsection{New results}
In this paper we use the adversarial approaches of \cite{BMQ05,Abidin2009},
extend them significantly to full scale eavesdropping strategies, and
demonstrate in detail how to break several explicit QKD protocols,
that employ the authentication method of Ref.~\cite{Peev2005}, under
the assumptions that the adversary possesses unbounded computation
resources and in some cases quantum memory.
The general attack-pattern is a
sophisticated (interleaving) MITM attack, in which the adversary (Eve) carries out independent protocols with the legitimate parties (Alice and Bob). In doing so Eve  manages to modify her respective protocol messages such that these collide with those of Alice and Bob  under the
first part of the authentication of Ref.~\cite{Peev2005}.
Depending
on the protocol variants (e.g., immediate vs.\ delayed
authentication), the different attacks which we study address sifting,
error correction, confirmation, and privacy amplification or only some
of these steps.  These techniques can be used to break a very broad
class of post-processing protocol realizations which include those
routinely used in practical implementations. With significant probability
that in most attacks approaches unity Eve shares
a key with the legitimate parties.

We also consider some countermeasures, which consist of modifications of
the two-step authentication mechanism. These modifications result in a
range of complications to Eve: (i) increasing Eve's computational load
substantially, (ii) forcing her to do considerable online
computation rather than offline; and (iii) depriving her of any attack potential by  finally re-establishing ITS
for the modified construction.
We give necessary and sufficient conditions for ITS with
this construction; that the conditions are sufficient is already known
from earlier results, but that the conditions are necessary is, as far
as we know, a new result.

\subsection{Structure of the paper}
Section \ref{sec:AuthenticationInQKD}
contains a motivation on why authentication is
needed in QKD, shortly reviews message authentication codes and Universal hashing,
and gives a more detailed description of the
authentication method under study here.  Section \ref{sec:Attacks}
introduces the attack vectors and then details three different QKD
protocols and attacks against them in a step-by-step fashion.  In
Tables \ref{table:sifting-attacks} and
\ref{table:post-processing-attacks} we summarize the attacks and the
gained knowledge on the key for each of them, as well as for a number of further protocol
versions.  Section \ref{sec:Countermeasures} discusses how the
security of the authentication method can be improved and presents a
theorem that gives necessary and sufficient conditions for ITS
of the modified method.  The conclusions and outlook are given in Section
\ref{sec:Conclusions}.  The Appendices contain technical proofs and
summarize some definitions of Universal hash function families.

\section{Authentication in QKD}\label{sec:AuthenticationInQKD}

The need for authentication becomes clear if we consider for a moment
the opposite case, i.e.  an ``unprotected'' channel that allows
arbitrary modification of messages in transit.

\subsection{Man-in-the-middle attacks and Message Authentication Codes}\label{sec:AuthInQKD}
The unprotected channel
will enable a straightforward ``man-in-the-middle'' (MITM) attack:
\begin{definition}[straightforward man-in-the-middle (MITM) attack]\label{def:SF-MITM}
In the \emph{straightforward man-in-the-middle attack} the eavesdropper (Eve) builds or buys a
pair of QKD devices identical to those of the legitimate parties
(Alice and Bob) and cuts ``in the middle'' the quantum and classical
communication channels connecting Alice and Bob.  She now connects
each of her devices to the ``loose ends'' of the quantum and classical
channels and launches two \emph{independent} QKD sessions, one with
Alice and the other with Bob. Eve effectively pretends to be Bob to
Alice and Alice to Bob. Eventually she shares a (different) key with
each of the legitimate parties which allows her to communicate with
them independently. If Alice sends an encrypted message to Bob, Eve
can intercept the message and decrypt it, encrypt it with the key she
shares with Bob, and send it to Bob.
\end{definition}
Alice and Bob never come to
realize that the security of their communication is completely lost.
This is completely analogous to the classic MITM
attack against the unauthenticated Diffie-Hellman key agreement
protocol \cite[Chap.~12.9.1]{MenezesOV96}. Obviously, any (classic or
quantum) key agreement protocol that has no proper authentication
(or integrity check) of messages exchanged between the communicating
parties can be broken with a similar impersonation attack.

So ideally an adversary should not be able to insert messages
into the channel, and moreover messages sent by one legitimate user to the
other are always delivered and are not modified.  However, there are no a-priori authentic
communication channels.
Appending  a so-called Message
Authentication Code (MAC) to each communication message can mimic an authentic channel, but cannot
guarantee delivery of messages, as these can in practice always be
blocked.

\begin{definition}[Message Authentication Code (MAC)\cite{MenezesOV96}] A \emph{Message Authentication Code
    (MAC) algorithm} is a family of functions $h_K$ parameterized by a
  secret key $K$ with the following properties: (i) given a message
  $x$ and a key $K$, the MAC value $h_K(x)$ (also called \emph{tag})
  should be easy to compute, (ii) it maps a message $x$ of arbitrary
  finite bitlength to a tag $h_K(x)$ of fixed bitlength $n$, and (iii)
  given a description of the function family $h$, for every fixed
  allowable value of $K$ it must be computation-resistant. The last
  property means that given zero or more message-tag pairs
  $(x_i,h_K(x_i))$ it is computationally infeasible to compute any
  message-tag pair $(x,h_K(x))$ for any new input $x \ne x_i$ without
  knowing $K$.
\end{definition}


Normally, MACs are either based on (a) cryptographic hash functions
(e.g. HMAC-SHA-256 based on SHA-256), on (b) block cipher algorithms
(e.g. AES-CMAC based on AES), or on (c) Universal$_2$ hashing (see
\ref{uhf}). Message authentication codes based on (a) or (b) typically
use one key for many messages, and offer computational security, i.e.
they can only be broken with sufficient computing power (or when a hidden
weakness of the algorithm is detected).

\subsection{Universal hashing and UC security}
MACs based on Universal$_2$
hashing have to use one (new) key per message, but offer
information-theoretic security which is independent of the adversary's
computing power.  In more detail, for SU$_2$ hash
functions, a random guess of the MAC tag is provably the best possible
attack, while $\epsilon$-ASU$_2$ hash functions
still provide a strict upper bound (namely $\epsilon$) on the
attacker's success probability to substitute an observed message-tag
pair with another valid message-tag pair (\emph{substitution attack})
or to insert a valid message-tag pair.

Universal hashing was originally proposed by Wegman and
Carter \cite{WegmanCarter81,CarterWegman79}.  It was identified
as an appropriate match for QKD, as Wegman-Carter's and later
constructions \cite{Krawczyk94,Stinson92,Mehlhorn84,Shoup96} consume
relatively low amount of key.  The aim is to have less key consumption
than the key generation in a typical QKD session \cite{BennettBBSS92},
so that each session can reserve a portion of its output for
authentication of the subsequent one. Then, the
process only needs to be kick-started by an initial, one-time,
pre-distributed secret.

Security analysis of QKD (see, e.g., Ref.~\cite{RMP} and references
therein for a recent overview) has typically been based on the
requirement that the classical post-processing communication is
secured by a MAC based on Universal hashing, to upper bound an
adversary's chances to modify or insert messages without getting
detected.
In addition UC-security definitions for QKD have been established
\cite{BenOr2004,BenOr2005,Renner_Koenig05,JMQRR08}. As a consequence
combining the two $\varepsilon$-UC-secure protocols QKD and ASU$_2$
authentication yields a joint, \emph{UC-secure key growing mechanism}
over non-authentic classical channels (see \cite{Portmann2012}).
Thus, MACs based on ASU$_2$ hashing are
sufficient for security, but it is an open question whether they are
also necessary, and what security would be obtained for other
alternatives.

\subsection{The non-ITS authentication mechanism of
  Ref.~\cite{Peev2005}}

The authentication mechanism proposed in
Ref.~\cite{Peev2005} aimed to
consume less key than ASU$_2$ authentication. The
intended goal is a positive key balance of the combination QKD plus
authentication even in realizations that use (relatively) short
blocks in the post processing step. Note that later experimental
progress has made these objectives not so relevant,
as short key blocks are no longer necessary from an implementation
perspective \cite{Sasaki}. Still, a
complete security analysis of the authentication mechanism of
\cite{Peev2005} is intriguing from a theoretical point of view as the
mechanism has interesting properties not shared by any of the methods
mentioned  above.

To start with, we summarize the proposal of Ref.~\cite{Peev2005}
and introduce some notation (see also Table~\ref{table:symbols}).
The proposal relies on a
two-step hash function evaluation: $t=g_K(m):=h_K(f(m))$, where
$f:\mM\to\mZ$ is a publicly known hash function and $h_K:\mZ\to\mT$
belongs to an SU$_2$ hash function family
$\mH$ (see \ref{uhf}).  Here, $\mM$ is the set of messages to be
authenticated, $\mZ$ is an intermediate set of strings, and $\mT$ is
the set of tags with $|\mM|\gg|\mZ|>|\mT|$.


\subsubsection{Insertion of messages is ruled out}
Now assume that Eve attempts to calculate or guess the tag for a fixed message
$\mE$ that she wants to insert. In that case she has a success probability of $1/|\mT|$ (irrespective of her
computing power).  This is because the key $K$ which identifies the
SU$_2$ hash function is not known to her.
Thus, the authentication mechanism is (first-)preimage resistant, i.e.\ knowledge of
the authentication tag alone does not allow to find messages yielding the same tag.
\subsubsection{Substitution with given messages is ruled out}
Let us further assume, Eve has intercepted a (valid) message-tag pair $(\mA,t)$ from Alice
and wants to substitute it with her \emph{fixed} message $\mE$ and some tag.
Then Eve's chances increase slightly
because she now has access to the intermediate value $f(\mA)$, and can
check if $f(\mA)=f(\mE)$. If there is a collision, Eve knows that
$(\mE,t)$ is a valid message-tag pair and can just send this,
otherwise she guesses the tag as above.  The total probability of
success is now bounded by the guessing probability plus the collision
probability, and assuming that $\mA$ is random to Eve and that $f$ is a
good hash function, the collision probability is low (for details see
\cite{Peev2005}).  So this two-step authentication works well in a situation when Eve is given a fixed
message $\mE$ to generate the tag for. One immediate consequence is that
Eve cannot perform the straightforward MITM attack
(cf.~Definition \ref{def:SF-MITM}) with significant success
probability since
she would need to generate tags for messages $\mE$ from her devices
without knowledge of $K$, for which case the above bound applies.
\subsubsection{The weakness}
However, one should note that using the intercepted
message-tag pair $(\mA,t)$ and enough computational power, Eve can in principle search for other preimages of $t$ under $f$.  If she can find (at least) one message $\tildemE$
such that $f(\mA)=f(\tildemE)$ then
$h_K(f(\mA))=h_K(f(\tildemE))$ and therefore $(\tildemE,t)$ is a valid message-tag pair
\emph{for any key $K$}. She can now replace $\mA$ with $\tildemE$ with success
probability of 100\%.  The question now is if this (one of these) $\tildemE$
can be used in place of the message $\mE$. It would seem that, if Eve
strictly follows the appropriate QKD protocol (random settings, best
possible bit error rate, \ldots), this is not possible.

However, Eve
is not forced to follow the precise requirements of the QKD protocol
\cite{Abidin2009}; she only needs to make it seem to Alice and Bob
that she does so. For example, Eve does not need to use random
settings (e.g.~preparation bases and raw keys), or even correctly send all settings
she used. If it helps her, she can use a fixed sequence of settings or
report other settings for some qubits than the ones actually used.

An early suggestion \cite{BMQ05} was to \emph{select} the privacy
amplification map carefully, rather than generating it randomly. This
would give Eve a shared key with Bob, but not with Alice. Later, as
mentioned above, it was observed that Eve may deviate from the QKD
protocol in several places \cite{Abidin2009}. If Eve uses a fixed sequence of
settings (e.g.~measurement and preparation bases) on the quantum channel
this would enable her to do the
calculations for finding $\tildemE$ offline.  If Eve sends the wrong settings for
some of the qubits this will allow her to choose from several $\tildemE$, to get
a collision. This would constitute the basis for a
sophisticated MITM attack that can break simplified QKD protocols.  In these simplified protocols, the
breaches could be closed by relatively straightforward
countermeasures \cite{Peev2009}, but the security of the standard
and/or hardened protocols remained an open issue.
We aim to settle this in the present paper.

\section{Attacks against non-ITS authentication in QKD}\label{sec:Attacks}

In this section, we give detailed descriptions of four different
attacks on three different explicit QKD protocols. We also give an
overview of the effectiveness of this kind of attacks against  other QKD
protocols, and for different types of resources available to
Eve. In each case, the essence of the attack is to intercept a valid
message-tag pair (sent by Alice or Bob) and---using large computational resources
 and/or leveraging weaknesses of the public hash function algorithm---find 
further preimages of the tag (messages that hash to the same hash value as
the intercepted message) that are used by the eavesdropper.

\subsection{Probability for finding hash collisions in a set of messages}

\newcommand{\Prob}{\mathcal{P}}
\newcommand{\Pcoll}{\Prob^\textrm{succ}_\textrm{coll}}
\newcommand{\setM}{\mathcal{M}}

Assume that Eve has intercepted a message-tag pair $(\mA,t)$ from Alice.
The following lemma gives a lower bound for the probability that (under a mild
assumption) a set $\mathcal{M}$ of messages contains at least a single message $\mE$ that collides with $\mA$, i.e. 
$h_K(f(\mE))=t$.
\begin{lemma} \label{lem:collision1}
Let us assume that $f$ maps all messages in $\setM$ randomly onto $\mZ$.
Then the probability that at least one of the messages in $\setM$ is validated by the given tag $t=h_K(f(\mA))$ is
\begin{equation*}
\Pcoll =\Pr\left\{\exists \mE\in\setM: h_K(f(\mE))=t) \right\} > 1-\exp\left(-|\setM| |\mZ|^{-1}\right).
\end{equation*}
\end{lemma}
The proof of Lemma \ref{lem:collision1} is given in \ref{ProofCollision1}. Since no assumptions on the computational power of Eve are imposed, she will be able to find with probability $\Pcoll$ such an $\mE$. Note, that $|\setM|=|\mZ|$ is sufficient to get $\Pcoll > 0.63$.

\subsection{Probability for finding a hash collision with small Hamming distance to a given message}\label{sec:HashCollisions}

Assume that Eve has intercepted a message-tag pair $(\mA,t)$ from Alice.
The following corollary of Lemma~\ref{lem:collision1} states that (under a mild
assumption) for any fixed message $m^\Eve$, that Eve would like to
send, there exists with probability almost 1 a message
$\tilde{m}^\Eve$, such that (i) $\tilde{m}^\Eve$ is almost identical to $\mE$, 
i.e.\ $\tilde{m}^\Eve$ has \emph{small Hamming distance} to $m^\Eve$, and 
(ii) $(\tilde{m}^\Eve,t)$ will be accepted as authentic, i.e.\ $h_K(f(\tilde{m}^\Eve))=t$.
\newcommand{\setB}{\mathcal{B}}
\begin{corollary} \label{cor:collision2}
Let $\setB$ be the closed ball of all messages $m$ having a Hamming distance to $m^\Eve$ not exceeding $w$:
\begin{equation*}
\setB=\left\{ m: d_H(m,m^\Eve)\le w
\right\},
\end{equation*}
and let us assume that $f$ maps all messages in $\setB$ randomly onto $\mZ$.
Then the probability that at least one of the messages in $\setB$ is validated by the given tag $t=h_K(f(\mA))$ is
\begin{equation*}
\Pcoll =\Pr\left\{\exists \tilde{m}^\Eve\in\setB: h_K(f(\tilde{m}^\Eve))=t) \right\} > 1-\exp\left(-|\setB| |\mZ|^{-1}\right).
\end{equation*}
For simplicity we can loosen the bound and replace $|\setB|$ by ${\ell \choose w} < |\setB|$, where $\ell$ is the length of the binary message $m^\Eve$.

\end{corollary}

The proof of Corollary~\ref{cor:collision2} is given in \ref{ProofCollision2}. Since no assumptions on the computational power of Eve are imposed, she will be able to find with probability $\Pcoll$ such an $\tilde{m}^\Eve$.
For typical parameters, e.g. $|\mZ|=2^{256}$, and $\ell=2^{12}$ ($2^{13}$, $2^{14}$, $2^{15}$, $2^{16}$, $2^{17}$), a Hamming distance $w=32$ ($28$, $25$, $22$, $20$, $19$) is sufficient to reach a success probability of 99.9\%.

\subsubsection{Attacking the sifting stage -- hiding in the noise}
Let us assume that during the sifting stage the legitimate parties will exchange messages that contain one bit per preparation/measurement basis (time slot).
Let us assume further that Eve can successfully attack the protocol (as discussed below), if she can substitute such a message, say $m^\Alice$, with a sifting message of her choice, say $m^\Eve$. From Corollary \ref{cor:collision2} it follows that if Eve replaces $m^\Alice$ with $\tilde{m}^\Eve$  instead of $m^\Eve$, she will introduce at this step (at most) an additional error $\epsilon=w/\ell\approx 0.78\%$ ($0.34\%$, $0.15\%$, $0.067\%$, $0.031\%$, $0.014\%$) 
(with parameters from above; in the worst case each modified basis bit could result in one flipped sifted key bit).
This strategy allows Eve to hide the substitution of sifting messages in the usual noise on the quantum channel, since the following error correction step will also remove these small additional deviations. Obviously, the larger the message length $\ell$, the easier Eve's task is.

\subsubsection{Correlating the sifted keys of Alice and Bob}

Assume for the moment that Eve has intercepted the quantum bits from Alice and has saved them into her quantum memory. Assume further that she managed to fool Alice, so that Alice announces her the corresponding preparation bases. Then Eve can measure the quantum bits and get Alice's raw key.

The strongest of the presented attacks is based on the fact that once
Eve knows the raw key of Alice, she can by using a modification of
Bob's sifting message ensure with high probability that the complete
sifted key of Alice will be almost identical to that of Bob (cf.\ 
description of \hyperref[sec:prot1]{Protocol 1} and
\hyperref[hl:prot1:stepSe2s]{step (Se'') of the attack} against it.).
\newcommand{\Psift}{\mathcal{P}^\textrm{succ}_{\textrm{sift-attack}}}
\begin{lemma} \label{lem:sifting}
Let $\IA\in_R\{0,1\}^n$ be the raw key
that Alice has used to prepare her quantum bits. Once Eve knows $\IA$
she can determine $\lfloor n/2 \rfloor - k$ bits of any fixed sifted key $\SE$ that she wants Alice to create with probability
\begin{equation}
\Psift\ge 1-\exp\left(-\frac{2k^2}{n}\right)
\end{equation}
by replacing Bob's sifting message with a message $\BAE$ that she has prepared.
\end{lemma}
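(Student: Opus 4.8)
The plan is to exhibit an explicit greedy strategy for Eve and then reduce the analysis to a concentration bound for a binomial distribution. First I would fix the sifting step precisely: since Eve has obtained Alice's preparation bases $\BA$ together with her raw key $\IA$, for every time slot $i$ she can decide, by declaring in $\BAE$ either an agreeing or a disagreeing ``Bob basis'', whether Alice retains the raw bit $\IA_i$ (basis agreement) or discards it (basis disagreement). Thus Eve is free to retain any subset of slots she likes, but she \emph{cannot} alter the value of a retained bit: if she keeps slot $i$, the corresponding sifted bit of $\SE$ equals $\IA_i$.

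Next I would describe the greedy matching that realises a prescribed sifted key $\SE=s_1 s_2\cdots$. Scanning the slots $i=1,\dots,n$ from left to right and maintaining a pointer $j$ to the next target bit, Eve retains slot $i$ (declares agreeing bases) precisely when $\IA_i=s_j$, in which case she advances the pointer; otherwise she discards slot $i$. By construction the retained bits spell out $\SE$ in order, so the number of bits of $\SE$ that Eve forces Alice to create equals the number of retained slots.

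The crux is to show that the retention indicators $K_1,\dots,K_n$ (with $K_i=1$ iff slot $i$ is retained) are i.i.d.\ Bernoulli$(1/2)$. I would argue by conditioning: the pointer value reached before processing slot $i$ is a function of $\IA_1,\dots,\IA_{i-1}$ only, and hence so is the target bit $s_j$ against which $\IA_i$ is compared. Because $\IA\in_R\{0,1\}^n$, the bit $\IA_i$ is uniform and independent of the past, so $\Pr\{K_i=1\mid K_1,\dots,K_{i-1}\}=\Pr\{\IA_i=s_j\}=1/2$ irrespective of the history, and therefore $\sum_i K_i$ is Binomial$(n,1/2)$ with mean $n/2$. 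This independence is the one genuinely non-obvious point, since the greedy rule advances the target only on a match and one might fear a resulting dependence; the conditioning argument shows the match probability stays $1/2$ regardless.

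Finally I would apply Hoeffding's inequality to this sum of i.i.d.\ bounded indicators to get $\Pr\{\sum_i K_i\ge n/2-k\}\ge 1-\exp(-2k^2/n)$, and since $\lfloor n/2\rfloor-k\le n/2-k$ the event $\{\sum_i K_i\ge n/2-k\}$ is contained in $\{\sum_i K_i\ge \lfloor n/2\rfloor-k\}$, so the same lower bound transfers and yields $\Psift\ge 1-\exp(-2k^2/n)$, as claimed.
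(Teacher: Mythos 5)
Your proof is correct and follows essentially the same route as the paper's: the greedy left-to-right subsequence embedding (the paper's Algorithm A), reduction of the number of matched positions to a Binomial$(n,1/2)$ tail, and an application of Hoeffding's inequality. The only divergence is how the binomial law is obtained --- you use a sequential conditioning argument on the match indicators, whereas the paper counts the sequences containing $s$ as a subsequence and reduces by symmetry to counting length-$n$ strings with at least $m$ zeroes; both derivations are sound.
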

Eve's attack will succeed if a \emph{subsequence} of $\SE$ (derived by
deleting some elements without changing the order) of length at least
$\lfloor n/2 \rfloor - k$ is also a \emph{subsequence} of $\IA$.  The
proof and a simple and efficient algorithm to generate $\BAE$ is given
in \ref{app:subsequence}. Note, that $k=O(\sqrt{n})$ is sufficient for
$\Psift\approx 1$.

%
%
%

\newcommand{\fake}{\textrm{fake}}

\subsection{General remarks, protocol notation and settings used}
\label{sec:intro-attacks}

Any successful attack is based on finding protocol modifications yielding communication messages that collide with  those of the legitimate parties under the fixed hash function  in the first (internal) stage of authentication  throughout the complete chain of the QKD protocol. Therefore, in contrast to the case of authentication by universal hashing, now QKD post-processing protocols  differing in the precise definition of their separate algorithmic steps (e.g. mode of authentication --- immediate or delayed, exact order of exchange of sifting messages,
whether error-correction bits are encrypted or not,
etc.) become inequivalent and exhibit different types of vulnerabilities. For this reason each attack discussed below is adapted to a specific protocol. Both the protocols and the corresponding attacks are carefully and formally defined.

We consider exclusively but without loss of generality  the case of BB84 QKD protocols, as the attacks we discuss are essentially independent of the particular form of quantum communication. Moreover,  all protocols that we study are stated as prepare-and-measure ones. It is, however, straightforward to adapt  the attacks discussed below to the case of entanglement based protocols.

It is implicitly assumed that on receiving messages Alice and Bob
check their message tags for correctness, and that incorrect message
tags lead them to conclude that Eve is intercepting, and to abort the
protocol. In case the message authentication is UC-secure the 
resulting protocols are also UC-secure.
A collection of used symbols is given in
Table~\ref{table:symbols}.

\begin{table}[tp]
\caption{Summary of symbols used in the paper.\label{table:symbols}}
\begin{tabularx}{\textwidth}{lX}
\br
Symbol & Description \\
\mr
\Alice, \Bob, \Eve & Legitimate parties: Alice, Bob; and eavesdropper Eve. \\
$\qchan$, $\cchan$ & quantum channel, classical channel \\
$\BA$ ($\BE$), $\IA$ ($\IE$) & Alice's (Eve's) string for bases choice and raw key, resp., used for preparing the quantum states. \\
$\BB$, $\IB$ & Bob's bases choice and measurement results (i.e.\ his raw key). \\
$\sA$ ($\sE$) & quantum state, prepared by Alice (Eve). \\
$\mack$ & notification that a party has finished its measurements. \\
$\MAC{K}{\cdot}$ & keyed hash function with key $K$. \\ 
$\Bases^{\common{\textrm{X}}{\textrm{Y}}}$ & string indicating the positions where the parties X and Y successfully prepared and measured in the same basis. \\ 
$\SA$ ($\SB,\SE$) & sifted key of Alice (Bob, Eve). \\
$\SEA$ ($\SEB)$ & sifted key shared between Eve and Alice (Bob). \\
$\hatSB$ & error corrected (reconciled) key of Bob. \\
$\hatSEA$ & error corrected (reconciled) key that Eve shares with Alice. \\
$\KA$ ($\KB,\KE$) & final key of Alice (Bob, Eve). \\
$\KEA$ ($\KEB$) & final key shared between Eve and Alice (Bob). \\
$\ECCode:=\{\ECCode_1,\dots,\ECCode_n\}$ & set of predefined parity check matrices, used for forward error correction in different error rate regimes. \\
$i$ & index into the set $\ECCode$, denoting the actual parity check matrix $\ECCode_i$ used. \\
$\COCode$ & description of (ITS) confirmation function. \\
$\PACode$ & description of (ITS) privacy amplification function. \\
$\errorr$ & error rate on $\qchan$. \\
$\fail$ & notification that a partner should abort protocol. \\
\br
\end{tabularx}
\end{table}

\subsection{Protocol 1 -- BB84 with immediate message authentication
  -- Alice sends bases}
\label{sec:prot1}

We divide the protocol into two separate parts:
\hyperref[sec:prot1-qsift]{(S) quantum state transmission and sifting},
and \hyperref[sec:prot1-postproc]{(P) post processing} (consisting of error
correction, confirmation, and privacy amplification). Part (P) needs
the result of (S) (i.e.\ the sifted keys) as input.

\begin{figure}
  \centering
  \includegraphics[width=\textwidth]{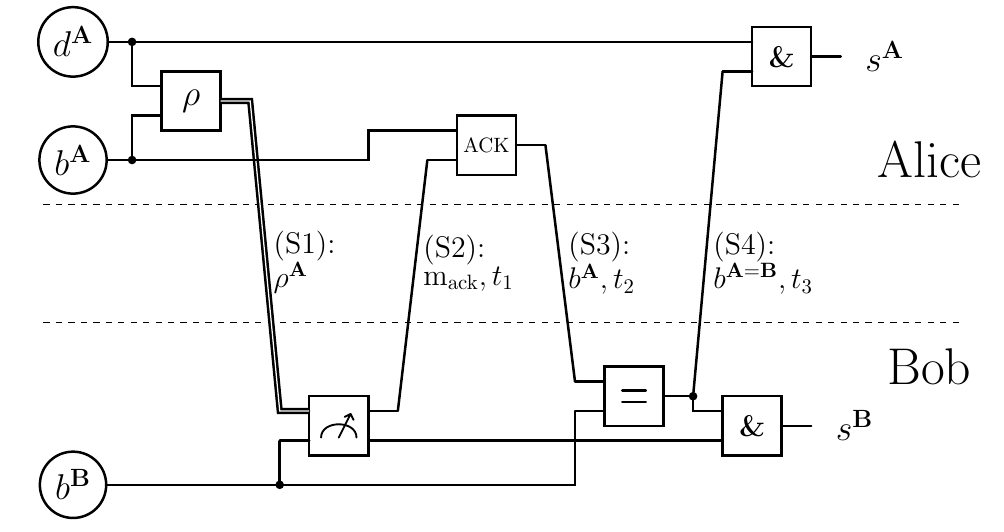}
  \caption{Protocol 1 (BB84, Quantum exchange and sifting only). Time flow is from left to right. Single (double) lines represent classical (quantum) communication. Local protocol actions are depicted by boxes: $\rho$ depicts state preparation, the indicator is a quantum measurement device, the ACK box denotes that Alice waits for Bob's message until she continues with the protocol, = denotes the calculation of identical bases, \& denotes the filtering of signals (in different bases).}
  \label{fig:attack1}
\end{figure}

\subsubsection{State transmission and sifting (S)}\label{sec:prot1-qsift} \ \\
SUMMARY: 3 classical messages are exchanged. Each classical message is accompanied by a corresponding tag (keyed hash value, MAC).
\begin{enumerate}[1.]
 \item \Setup $\Alice$ and \Bob\ share the $3$  keys $K_1,K_2,K_3$.
 \item \Messages Let $t_1:=\MAC{K_1}{\mack}$, $t_2:=\MAC{K_2}{\BA}$, and $t_3:=\MAC{K_3}{\BAB}$ be the authentication tags used in messages (S2), (S3), and (S4), resp.
   \begin{enumerate}[(S1)]
     \makeatletter
     \renewcommand{\p@enumii}{S}
     \makeatother
   \item $\Alice \stackrel{\qchan}{\longrightarrow} \Bob:\quad \sA$ \label{prot1:s1} 
   \item $\Alice \stackrel{\cchan}{\longleftarrow} \Bob:\quad \mack,t_1$ \label{prot1:s2} 
   \item $\Alice \stackrel{\cchan}{\longrightarrow} \Bob:\quad \BA,t_2$ \label{prot1:s3} 
   \item $\Alice \stackrel{\cchan}{\longleftarrow} \Bob:\quad \BAB,t_3$ \label{prot1:s4} 
   \end{enumerate}

\item \Actions
\begin{enumerate}[(Sa)]
\item \hypertarget{prot1:s:q}{} \Alice\ creates two random bit strings, her raw key $\IA$, and the bases string $\BA$, $\IA, \BA \in_r \{0,1\}^N$. For all pairs of bits $\left(\IA_k,\BA_k\right)$ \Alice\ generates the corresponding quantum states $\sA_k\in \{\state^0,\state^1,\state^2,\state^3\}$.
Using $\qchan$, \Alice\ sends the quantum state $\sA=\bigotimes_{k=1}^N\sA_k$ (``string'' of all $\sA_k$'s), i.e.\ \eqref{prot1:s1} to \Bob.

\item \hypertarget{prot1:s:Bm}{} \Bob\ creates a random bases string
  $\BB\in_r\{0,1\}^N$. \Bob\ measures $\sA$ in bases $\BB$ and obtains
  $\IB \in \{0,1,\myempty\}^N$, where $\myempty$ corresponds to no
  measurement result at \Bob, e.g., due to absorption in the channel,
  or the imperfection of the detectors.  For all $k$ with
  $\IB_k=\myempty$, \Bob\ sets $\BB_k=\myempty$.

\item \hypertarget{prot1:s:ack}{} Using $\cchan$, \Bob\ sends an acknowledgement message \eqref{prot1:s2} to \Alice.

\item \Alice\ waits until she has received \eqref{prot1:s2}, ensuring
  that the measurements have been finished before bases exchange is
  performed.  Using $\cchan$, \Alice\ sends \eqref{prot1:s3} to \Bob.

\item \Bob\ calculates a bit string $\BAB$, such that $\BAB_k=1$, if $\BA_k=\BB_k$, and $\BAB_k=0$, otherwise, for $1\le k \le N$.
\Bob\ removes from $\IB$ all bits $\IB_k$ where $\BAB_k=0$ and obtains $\SB$.
Using $\cchan$, \Bob\ sends \eqref{prot1:s4} to \Alice.

\item \Alice\ removes from $\IA$ all bits $\IA_k$ where $\BAB_k=0$ and obtains $\SA$.
\end{enumerate}
\end{enumerate}

\subsubsection{Post processing (P)
}\label{sec:prot1-postproc} \ \\
SUMMARY: 3 classical messages with MACs are exchanged.
\begin{enumerate}[1.]
 \item \Setup $\Alice$ and \Bob\ share $3$ keys $K_4,K_5,K_6$.
 \item \Messages Let $\TA= (i,\ECCode_i(\SA),\COCode,\COCode(\SA))$.
   \begin{enumerate}[(P1)]
     \makeatletter
     \renewcommand{\p@enumii}{P}
     \makeatother
   \item $\Alice \stackrel{\cchan}{\longrightarrow} \Bob: \quad \msg[K_4]{\TA}$ \label{prot1:p1} 
   \item $\Alice \stackrel{\cchan}{\longleftarrow} \Bob: \quad \msg[K_5]{\errorr} \quad / \quad \msg[K_5]{\fail}$ \label{prot1:p2} 
   \item $\Alice \stackrel{\cchan}{\longrightarrow} \Bob: \quad \msg[K_6]{\PA} \quad / \quad \text{-----}$ \label{prot1:p3} 
   \end{enumerate}
 \item \Actions
 \begin{enumerate}[({P}a)]
   \makeatletter
   \renewcommand{\p@enumii}{P}
   \makeatother
\item \Alice\ estimates the parameters of $\qchan$ (based on the error rate of previous rounds or by choosing a default value), selects a corresponding forward error correction algorithm $\ECCode_i$ from a predefined set, and calculates the syndrome $\ECCode_i(\SA)$. \Alice\ determines a confirmation function $\COCode$, and calculates $\COCode(\SA)$. $\Alice$ sends \eqref{prot1:p1}.\label{prot1:pa}
\item \Bob\ uses $\ECCode_i$ and $\ECCode_i(\SA)$ to correct $\SB$ resulting in $\hatSB$. \Bob\ uses $\COCode$ to calculate $\COCode(\hatSB)$. \Bob\ checks whether $\COCode(\hatSB)=\COCode(\SA)$. If the identity holds, \Bob\ calculates the error rate
$\epsilon$ and sends it to $\Alice$ \eqref{prot1:p2}. If not, \Bob\ sends \emph{fail} to $\Alice$ \eqref{prot1:p2} and aborts the protocol.\label{prot1:pb}
\item If $\Alice$ receives $\epsilon$, $\Alice$ determines a corresponding privacy amplification function $\PA$, calculates $\KA=\PA(\SA)$, and sends \eqref{prot1:p3}. If $\Alice$ receives \emph{fail} she aborts the protocol.\label{prot1:pc}
\item If \Bob\ has not aborted in step \eqref{prot1:pb}, he now calculates $\KB=\PA(\hatSB)$.
With probability almost 1 (determined by the confirmation function CO), $\KA=\KB$.\label{prot1:pd}
\end{enumerate}

\end{enumerate}

\subsubsection{Attack against Protocol 1}\label{sec:prot1-attack} \ \\
\noindent Eve replaces the quantum channel between Alice and Bob with
ideal quantum channels and her instrumentation to prepare, store, and
(almost) perfectly measure quantum states.

\noindent RESULT: Alice, Bob, and Eve share identical keys $\KA=\KB=\KE$.
\begin{enumerate}[1.]
 \item \Notation \\
 $\tildeBases^x$: a string that deviates slightly from $\Bases^x$ to reach a hash collision with a given tag $t$ [used in messages (S3') and (S4')].
 \item \Attackmessages Let $t_1:= \MAC{K_1}{\mack}$, $t_2:=\MAC{K_2}{\BA}$, and $t_3:=\MAC{K_3}{\BEB}$ be the authentication tags used in messages (S2)--(S4).

\setlength{\jot}{-0.5mm}
%

\begin{enumerate}[(S1)]
\item $\Alice \stackrel{\qchan}{\longrightarrow} \Eve: \quad \sA$ 
\item[(S1')] $\Eve \stackrel{\qchan}{\longrightarrow} \Bob: \quad \sE$ 
\item $\Alice \stackrel{\cchan}{\longleftarrow} \Bob: \quad \mack,t_1$ 
\item $\Alice \stackrel{\cchan}{\longrightarrow} \Eve: \quad \BA, t_2$ 
\item[(S3')] $\Eve \stackrel{\cchan}{\longrightarrow} \Bob: \quad \tildeBE, t_2$ 
\item $\Eve \stackrel{\cchan}{\longleftarrow} \Bob: \quad \BEB, t_3$ 
\item[(S4')] $\Alice \stackrel{\cchan}{\longleftarrow} \Eve: \quad \tildeBAE, t_3$ 
\item[(P1)] $\Alice \stackrel{\cchan}{\longrightarrow} \Bob: \quad \msg[K_4]{\TA}$ 
\item[(P2)] $\Alice \stackrel{\cchan}{\longleftarrow} \Bob: \quad \msg[K_5]{\errorr} \quad / \quad \msg[K_5]{\fail}$ 
\item[(P3)] $\Alice \stackrel{\cchan}{\longrightarrow} \Bob: \quad \msg[K_6]{\PA} \quad / \quad \text{-----}$ 
  \end{enumerate}

 \item \Attackactions 
\begin{enumerate}[(Sa)]
\item \Alice\ performs step (Sa) of the protocol (prepares $\sA$ and sends it in (S1)).
\item[(Sa')] \Eve\ intercepts (S1) from $\Alice$ and stores $\sA$ in
  her quantum memory. Then \Eve\ performs exactly as \Alice\ in step
  (a) of the protocol:  \Eve\ determines random $\IE$ and $\BE$,
  prepares a state $\sE$ and sends it in (S1') to \Bob. 
\item \Bob\ performs step (Sb) of the protocol measuring the state
  \Eve\ has prepared, $\sE$, instead of $\sA$, as in the protocol (in
  the following denoted as $\sA \rightarrow \sE$). 
\item \Bob\ performs step (Sc) of the protocol, i.e.\ he sends (S2).
\item \Alice\ performs step (Sd) of the protocol. She sends (S3).
\item[(Sd')] \Eve\ intercepts (S3), i.e.\ $\BA$ and the corresponding tag $t_2$, and measures her quantum memory in bases $\BA$ and obtains an identical copy of \Alice's raw key, $\IA$. 
\item[(Sd'')] \Eve\ determines $\tildeBE$ (e.g.~using an exhaustive
  search), such that the intercepted $t_2$ validates $\tildeBE$ and
  $d_H(\tildeBE,\BE)$ is small (cf.~Corollary \ref{cor:collision2}), and
  sends (S3') to \Bob.
\item \Bob\ performs step (Se) of the protocol ($\BA \rightarrow
  \tildeBE$, $\BAB \rightarrow \BEB$), obtains $\SB$ and sends message
  (S4).
\item[(Se')] \Eve\ intercepts (S4), i.e.\ $\BEB$ and the corresponding
  tag $t_3$. \Eve\ removes from $\IE$ all bits $\IE_k$ where
  $\BEB_k=0$ and obtains $\SEB\approx \SB$ (in general $\SEB\ne\SB$
  because \Eve\ had to send $\tildeBE$ instead of her true basis
  choice $\BE$ in step (Sd'')).
\item[(Se'')] \label{hl:prot1:stepSe2s} Using the algorithm
  detailed in \ref{alg:bases1}, \Eve\ searches for a subsequence of $\IA$ that coincides with $\SEB$ and  calculates $\BAE$ such that in
  \Alice's next step, (Sf), \Alice\ would create $\SA\approx \SEB$ as
  her sifted key. Typically \Eve\ will have to allow for $O(\sqrt{n})$
  bits that will be different between $\SA=\SEA$ and $\SEB$ (see Lemma
  \ref{lem:sifting}).
\item[(Se''$\!$')] As in step (Sd'') \Eve\ determines $\tildeBAE$ with
  small Hamming distance to $\BAE$, this time validated by $t_3$
  obtained in step (Se'), calculates the actual sifted key of \Alice,
  $\SEA\approx\SEB$ and sends (S4') to \Alice.
\item \Alice\ performs step (Sf) of the protocol ($\BAB \rightarrow
  \tildeBAE$) and obtains $\SA=\SEA$.
\end{enumerate}

Note: Eve has almost reached her goal, as $\SA=\SEA\approx\SEB\approx\SB$ holds. The subsequent error correction step allows her to reach $\KA=\KE=\KB$:

\begin{enumerate}[(Pa)]
\item $\Alice$ performs step (Pa) of the protocol. Eve reads (P1), and
  uses the syndrome to correct her sifted key (in case \Alice's
  preparation and/or \Eve's quantum measurement and preparation are not
  100\% perfect, so that $\SEA\approx\SA$).
\item \Bob\ performs step (Pb) of the protocol: $\SA=\SEA=\hatSB$.
\item $\Alice$ performs step (Pc) of the protocol and obtains $\KA=\PA(\SA)$. 
\item[(Pc')] \Eve\ reads (P3),  the privacy amplification function $\PA$. \Eve\ calculates $\KE=\PA(\SEA)=\KA$.
\item \Bob\ performs step (Pd) of the protocol: $\KA=\KE=\KB$.
\end{enumerate}
\end{enumerate}

This attack completely breaks protocol 1. Eve has an identical copy of Alice's and Bob's shared ``secret'' key. This is the strongest possible attack. For instance, using her copy of the key, Eve can simply decrypt messages from, and encrypt and/or authenticate new messages to both parties.

If this key is used to authenticate further QKD rounds, Eve can now continue with a much simpler impersonation attack, in which she does not have to calculate hash collisions or use her quantum memory.

%

\begin{figure}
  \centering
  \includegraphics[width=\textwidth]{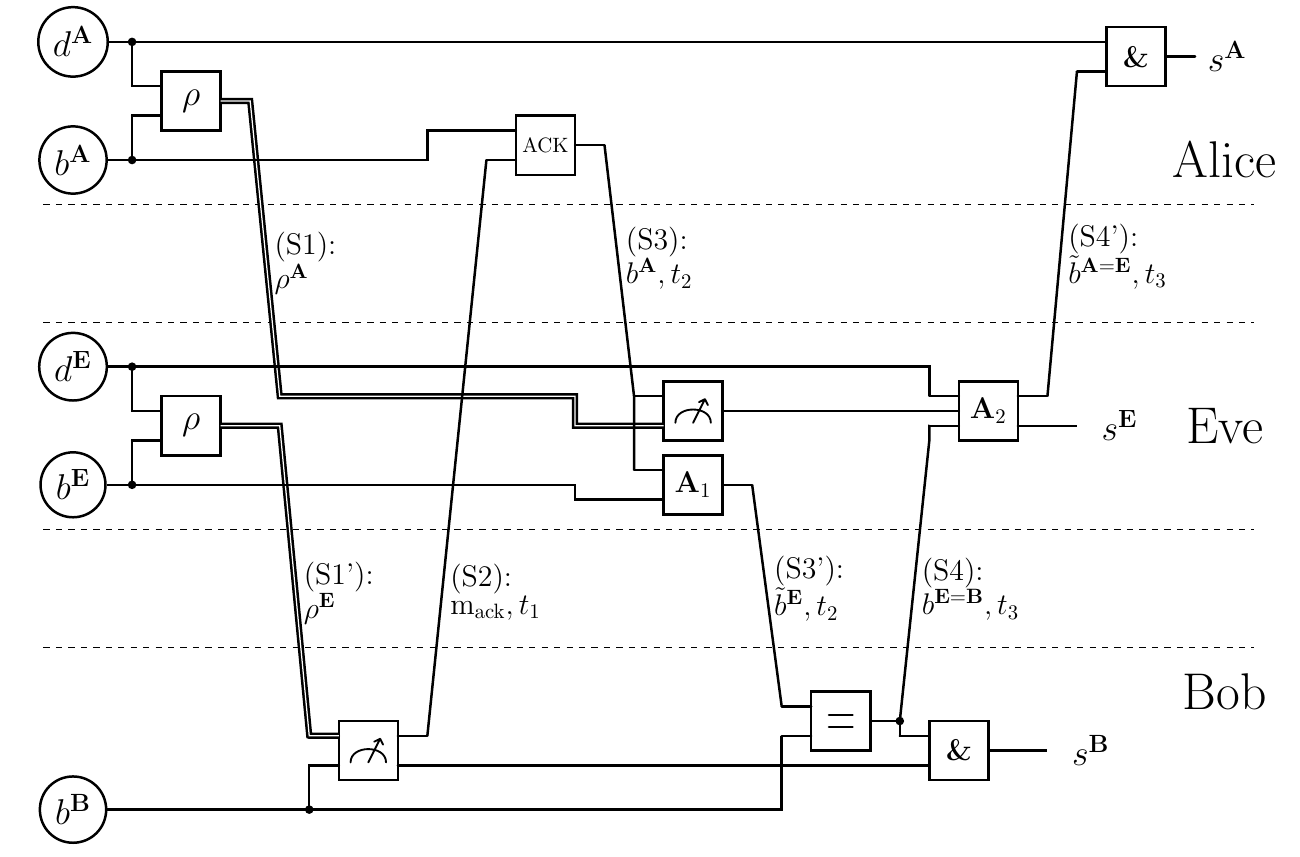}
  \caption{Interleaving attack against quantum exchange and sifting of
    \hyperref[sec:prot1]{Protocol 1},
    a QKD-protocol with immediate authentication. Time flow is from
    left to right. Single (double) lines represent classical (quantum)
    communication. See caption of Fig.~\ref{fig:attack1} for a
    description of boxes and symbols. The new boxes A$_{1,2}$ denote
    the attack actions, described in protocol steps (Se) through
    (Se''$\!$'). Employing quantum memory Eve manages to bring Alice
    and Bob to distill a sifted key that she knows with probability
    approaching 1.}
\end{figure}

\subsection{Protocol 2 -- BB84 with delayed message authentication -- Alice sends bases}\label{sec:prot2}
This protocol is very similar to \hyperref[sec:prot1]{Protocol 1}, the
difference is the authentication method: the authentication is delayed
and performed only at the end of the protocol verifying the integrity
of all messages. This, however, will change details of our attack
strategy: until the very last message we don't have to care about
authentication, but at the end we attack the privacy amplification
matrix to get enough degrees of freedom to find collisions (step
(Pc'), see below).

\noindent SUMMARY: 7 classical messages are exchanged. A nonce is used to enforce synchronization. The two last messages are authenticated with MACs.
\begin{enumerate}[1.]
 \item \Notation $\nB$: random number (nonce), created by \Bob.
 \item \Setup $\Alice$ and \Bob\ share two keys $K_1,K_2$.
 \item \Messages Let $\TA= (i,\ECCode_i(\SA),\COCode,\COCode(\SA))$, $M^\Alice=(\BA,\TA,\PA)$, $M^\Bob=(\nB,\BAB,\errorr/\fail)$.
   \begin{enumerate}[(S1)]
   \item $\Alice \stackrel{\qchan}{\longrightarrow} \Bob: \quad \sA$ 
   \item $\Alice \stackrel{\cchan}{\longleftarrow} \Bob: \quad \nB$ 
   \item $\Alice \stackrel{\cchan}{\longrightarrow} \Bob: \quad \BA$ 
   \item $\Alice \stackrel{\cchan}{\longleftarrow} \Bob: \quad \BAB$ 
   \item[(P1)] $\Alice \stackrel{\cchan}{\longrightarrow} \Bob: \quad \TA$ 
   \item[(P2)] $\Alice \stackrel{\cchan}{\longleftarrow} \Bob: \quad {\errorr}\ /\ \msg{\fail},\MAC{K_1}{M^\Bob}$ 
   \item[(P3)] $\Alice \stackrel{\cchan}{\longrightarrow} \Bob: \quad \msg{\PA},\MAC{K_2}{M^\Alice}\ / \ \text{-----}$ 
  \end{enumerate}
\item \Actions Steps (Sa)--(Sf) and (Pa)--(Pd) are identical to that
  of protocol 1, with the following exceptions: (a) only the two last
  messages of the protocol, (P2) and (P3), [which are sent in step
  (Pb) and (Pc)] have MACs attached that authenticate all messages
  from Bob to Alice and Alice to Bob, respectively, (b) in step (Sc)
  the message (S2) contains a nonce $\nB$, a random number that is
  chosen by Bob and used to ensure that Bob has
  finished measuring before the bases exchange starts. Using a fixed
  $\mack$ as in protocol 1 instead of the random nonce $n^\Bob$ would
  allow for a trivial attack. 
\end{enumerate}

\subsubsection{Attack against Protocol 2 (Eve only attacks messages to Bob)}\label{sec:prot2-attack}

Eve replaces the quantum channel between Alice and Bob, with ideal quantum channels and her instrumentation to prepare, store and perfectly measure quantum states.
The first part of the attack is similar to the attack against protocol 1 but it differs in several essential instances. All steps from (Sa) to (Sd')  are basically the same, but messages (S2) and (S3) are sent without MACs. From now on the attack differs so that Eve can cope with the form of postponed authentication utilized in protocol 2. In particular, we assume that Eve cannot manipulate the message that contains the error rate $\errorr$ on the quantum channel. This could be the case, for example, if $\errorr$ is encoded as 16 bit integer: the existence of hash collisions is very unlikely, since it is impossible to reach the needed Hamming distance of at least 19 (see Sec.~\ref{sec:HashCollisions}). This in turn implies, that Eve can also not manipulate any previous message from Bob to Alice (since she does not know what value of $\errorr$ Bob will be transmitting, she does not know which messages to prepare to get a hash collision). In particular, Eve cannot modify the sifting message of Bob, which rules out an attack analogous to the attack against protocol 1, described above. Amazingly, although Eve cannot modify any message from Bob to Alice, she can still mount the most powerful attack (Alice, Bob, and Eve share the same key)!

\noindent RESULT: Alice, Bob, and Eve share identical keys $\KA=\KB=\KE$.

\begin{enumerate}[1.]
\item \Attackmessages In addition to the definitions in the protocol above, let $t_2=\MAC{K_2}{M^\Alice}$.
  \begin{enumerate}
  \item[(S1)] $\Alice \stackrel{\qchan}{\longrightarrow} \Eve: \quad \sA$
  \item[(S1')] $\Eve \stackrel{\qchan}{\longrightarrow} \Bob: \quad \sE$
  \item[(S2)] $\Alice \stackrel{\cchan}{\longleftarrow} \Bob: \quad n^\Bob$
  \item[(S3)] $\Alice \stackrel{\cchan}{\longrightarrow} \Eve: \quad \BA$
  \item[(S3')] $\Eve \stackrel{\cchan}{\longrightarrow} \Bob: \quad \BE$
  \item[(S4)] $\Alice \stackrel{\cchan}{\longleftarrow} \Bob: \quad \BEB$
  \item[(P1)] $\Alice \stackrel{\cchan}{\longrightarrow} \Eve: \quad \TA$
  \item[(P1')] $\Eve \stackrel{\cchan}{\longrightarrow} \Bob: \quad \TE$
  \item[(P2)] $\Alice \stackrel{\cchan}{\longleftarrow} \Bob: \quad {\errorr}\ /\ \msg{\fail},\MAC{K_1}{M^\Bob}$
  \item[(P3)] $\Alice \stackrel{\cchan}{\longrightarrow} \Eve: \quad \msg{\PA},t_2\ / \ \text{-----}$
  \item[(P3')] $\Eve \stackrel{\cchan}{\longrightarrow} \Bob: \quad \msg{\PE},t_2\ / \ \text{-----}$
  \end{enumerate}
 \item \Attackactions 
   \begin{enumerate}
   \item[(Sa)] -- (Sd') Identical to those of protocol 1 (cf.\
     Sec.~\ref{sec:prot1-attack}), up to the absence of authentication
     tags in the present protocol.
   \item[(Sd'')] \Eve\ performs step (Sd) of the protocol ($\BA
     \rightarrow \BE$) and sends message (S3') to \Bob.
   \item[(Se)] \Bob\ performs step (Se) of the protocol ($\BA
     \rightarrow \BE$), obtains $\BEB$ and $\SB$, and sends message
     (S4).
   \item[(Se')] \Eve\ reads message (S4), i.e.\ $\BEB$. She removes
     from $\IE$ all bits $\IE_k$\ for $k: \BEB_k=0$ and obtains $\SEB
     =\SB$, possibly with noise.
   \item[(Sf)] \Alice\ performs step (Sf) of the protocol ($\BAB
     \rightarrow \BEB$) and obtains $\SA$.
   \item[(Sf')] \Eve\ removes from the string $\IA$ (which she knows
     exactly) all bits $\IE_k$\ for $k:\BEB_k=0$ and obtains
     $\SEA=\SA$.
\end{enumerate}
Note: Eve now shares two keys with Alice and Bob respectively $\SA=\SEA$ and $\SEB=\SB$ (or $\SEB\approx\SB$ as discussed above) but these keys are not correlated. After the subsequent error correction step \Eve\ already shares $\hatSA=\hatSEA$ and $\hatSEB=\hatSB$. Finally, attacking the privacy amplification step of the protocol \Eve\ succeeds in achieving her ultimate goal $\KA=\KE=\KB$:

\begin{enumerate}[(Pa)]
\item \Alice\ performs this step in the protocol and sends message (P1).
\item[(Pa')] \Eve\ intercepts (P1), produces
  $\TE=(i,\ECCode_i(\SEB),\COCode,\COCode(\SEB))$ and sends message
  (P1') to \Bob. (If \Eve\ would anticipate an error between her and
  \Bob\ that is too low, she can artificially modify her sifted key
  $\SEB$ to increase the error that \Bob\ registers.)
\item \Bob\ performs step (Pb) of the protocol ($\TA \rightarrow
  \TE$), obtains $\hatSB=\SEB$, calculates the error rate, determines
  $M^\Bob=(\nB,\BEB,\errorr/\fail)$, where $\BAB \rightarrow \BEB$ and
  sends message (P2).
\item \Alice\ accepts the authenticity of all the messages she has
  received, i.e.\ (S2), (S4), (P2), since \Eve\ has not modified any
  message and performs step (Pc) sending (P3).
\item[(Pc')] \Eve\ intercepts (P3). To break the authentication of
  (P3), \Eve\ calculates another PA function $\PE$, such that
  $\PE(\SEB)=\KA$ and $t_2=\MAC{K_2}{\BE,\TE,\PE}$. 
  \footnote{In \ref{app:collprob2} we demonstrate that for typical scenarios the probability that in step (Pc') a useful PA function $\PE$ for Eve exists is almost one.}
  To ensure the last
  condition it is sufficient that the message $(\BE,\TE,\PE)= M^\Eve$
  collides with $M^\Alice$ under the inner authentication hash
  function $f$, i.e $f(M^\Eve)=f(M^\Alice)$. \Eve\ sends (P3') to
  \Bob. (If Eve would be satisfied with Alice and Bob having different
  keys, both of which she knows, Eve only searches for any PA function
  $\PE$ such that $f(M^\Eve)=f(M^\Alice)$, but accepts
  $\KB=\PE(\SEB)=\KEB\ne\KA=\PA(\SEA)$.)
\item \Bob\ accepts the authenticity of all the messages he has
  received, i.e.\ (S3'), (P1'), (P3'), since he has received a valid
  tag ($t_2$) and performs the final step of the protocol to get
  $\KB=\PE(\hatSB)=\KE=\KA$.
\end{enumerate}

\end{enumerate}
\newcommand{\setP}{\mathcal{P}}
\newcommand{\bin}{\{0,1\}}
\newcommand{\len}{\mathrm{len}}

\subsection{Protocol 3 -- BB84 with immediate message authentication -- Bob sends bases}\label{sec:prot3}
This protocol is a variant of protocol 1, also using  immediate message authentication. Only part (S), i.e.\ the quantum state transmission and sifting is different: After measuring the quantum signals, instead of sending an acknowledge message as in protocol 1, Bob sends his bases information to Alice (implicitly acknowledging that he has finished his measurements). Alice replies with her basis information.

\subsubsection{State transmission and sifting}\label{sec:prot3-qsift} \ \\
SUMMARY: 2 classical messages with MACs are exchanged.
\begin{enumerate}[1.]
 \item \Setup \Alice\ and \Bob\ share two keys $K_1,K_2$.
 \item \Messages
   \begin{enumerate}[(S1)]
   \item $\Alice \stackrel{\qchan}{\longrightarrow} \Bob: \quad \sA$ 
   \item $\Alice \stackrel{\cchan}{\longleftarrow} \Bob: \quad \msg[K_1]{\BB}$ 
   \item $\Alice \stackrel{\cchan}{\longrightarrow} \Bob: \quad \msg[K_2]{\BAB}$ 
   \end{enumerate}

 \item \Actions
 \begin{enumerate}[(Sa)]
\item same as (Sa) in protocol 1: \Alice\ creates two random bit strings, $\IA,\BA \in_r \{0,1\}^N$. For each pair $\left(\IA_k,\BA_k\right)$ \Alice\ generates the corresponding quantum state $\sA_k\in \{\state^0,\state^1,\state^2,\state^3\}$. Using $\qchan$, \Alice\ sends the quantum state $\sA=\bigotimes_{k=1}^N\sA_k$ (``string'' of all $\sA_k$'s), i.e.\ (S1), to \Bob.
\item same as (Sb) in protocol 1: \Bob\ creates a random bit string $\BB\in_r\{0,1\}^N$. \Bob\ measures $\sA$ in bases $\BB$ and obtains $\IB \in \{0,1,\myempty\}^N$ as result. For all $k$ with $\IB_k=\myempty$, \Bob\ sets $\BB_k=\myempty$.

\item Using $\cchan$, \Bob\ sends (S2), i.e.\ $\BB$, to \Alice.

\item \Alice\ waits until she has received (S2). \Alice\ calculates the bit string $\BAB$, such that $\BAB_k=1$ if $\BA_k=\BB_k$
, and $\BAB_k=0$, otherwise.
\Alice\ removes from $\IA$ all bits $\IA_k$ where $\BAB_k=0$ and obtains $\SA$.

\item  Using $\cchan$, \Alice\ sends (S3), i.e.\ $\BAB$, to \Bob.

\item \Bob\ removes from $\IB$ all bits $\IB_k$ where $\BAB_k=0$ and obtains $\SB$.
\end{enumerate}
\end{enumerate}

\subsubsection{Post processing (P)}
\label{sec:prot3-postproc} \ \\
This part is completely identical to part (P) of protocol 1, cf.\ Sec.~\ref{sec:prot1-postproc}.

\subsubsection{Attack against Protocol 3}\label{sec:prot3-attack} \ \\

Eve replaces the quantum channel between Alice and Bob, with ideal quantum channels and her instrumentation. Eve must be able to prepare and perfectly measure quantum states. She does not need a quantum memory to perform her attack. Essentially this attack is a modified version of the well known intercept-resend attack, whereby the currently discussed authentication mechanism allows Eve to conceal the difference  between the sifted keys of Alice and Bob (of roughly 25\%) in the postprocessing stage of the protocol.

\noindent RESULT: Alice, Bob, and Eve share identical keys $\KA=\KB=\KE$.
\begin{enumerate}[1.]
 \item \Notation \\
 $\tildeBases^x$: a string that deviates slightly from $\Bases^x$ to reach a hash collision with a given tag $t$ [used in messages (S2') and (S3')].

 \item \Attackmessages Let $t_1=\MAC{K_1}{\BB}$, $t_2=\MAC{K_2}{\BAE}$, $t_3=\MAC{K_3}{\TA}$, $t_5=\MAC{K_5}{\PA}$.
\begin{enumerate}[(S1)]
\item $\Alice \stackrel{\qchan}{\longrightarrow} \Eve: \quad \sA$ 
\item[(S1')] $\Eve \stackrel{\qchan}{\longrightarrow} \Bob: \quad \sE$ 
\item $\Eve \stackrel{\cchan}{\longleftarrow} \Bob: \quad \BB, t_1$ 
\item[(S2')] $\Alice \stackrel{\cchan}{\longleftarrow} \Eve: \quad \tildeBE,t_1$ 
\item $\Alice \stackrel{\cchan}{\longrightarrow} \Eve: \quad \BAE, t_2$ 
\item[(S3')] $\Eve \stackrel{\cchan}{\longrightarrow} \Bob: \quad \tildeBEB, t_2$ 
\item[(P1)] $\Alice \stackrel{\cchan}{\longrightarrow} \Eve: \quad \TA, t_3$ 
\item[(P1')] $\Eve \stackrel{\cchan}{\longrightarrow} \Bob: \quad \TE, t_3$ 
\item[(P2)] $\Alice \stackrel{\cchan}{\longleftarrow} \Bob: \quad \msg[K_4]{\errorr} \quad / \quad \msg[K_4]{\fail}$ 
\item[(P3)] $\Alice \stackrel{\cchan}{\longrightarrow} \Eve: \quad \PA, t_5 \quad / \quad \text{-----}$ 
\item[(P3')] $\Eve \stackrel{\cchan}{\longrightarrow} \Bob: \quad \PE, t_5 \quad / \quad \text{-----}$ 
  \end{enumerate}

 \item \Attackactions 
\begin{enumerate}[(Sa)]
\item \Alice\ performs step (Sa) of the protocol.
\item[(Sa')] \Eve\ creates a random bit strings, $\BE \in_r
  \{0,1\}^N$. \Eve\ intercepts (S1) from \Alice\ and measures $\sA$ in
  bases $\BE$, she obtains $\IE$. For each pair
  $\left(\IE_k,\BE_k\right)$, \Eve\ prepares the corresponding quantum
  state $\sE_k$ and sends (S1') to \Bob.
\item \Bob\ performs step (Sb) of the protocol ($\sA \rightarrow
  \sE$).
\item \Bob\ performs step (Sc) of the protocol, i.e.\ he sends (S2).
\item[(Sd')] \Eve\ intercepts (S2) and performs \Alice's step (Sd) of
  the protocol($\BAB \rightarrow \BEB$, $\BA \rightarrow \BE$) and
  obtains her sifted key with Bob, $\SEB$.
\item[(Sc')] \Eve\ calculates $\tildeBE$, such that the intercepted
  $t_1$ validates $\tildeBE$ and $d_H(\tildeBE,\BE)$ is small. She
  then performs \Bob's step (Sc) of the protocol ($\BB \rightarrow
  \tildeBE$), i.e.\ she sends (S2') to \Alice.
\item \Alice\ performs step (Sd) of the protocol ($\BB \rightarrow
  \tildeBE$, $\BAB \rightarrow \BAE$), she obtains $\BAE$ (which is
  defined by $\BAE_k=1$, if $\BA_k=\tildeBE_k$, and $\BAE_k=0$,
  otherwise) and $\SA$.
\item \Alice\ performs step (Se) of the protocol ($\BAB \rightarrow
  \BAE$), i.e.\ she sends (S3).
\item[(Sf')] \Eve\ intercepts (S3) and performs \Bob's step (Sf) of
  the protocol ($\IB \rightarrow \IE$, $\BAB \rightarrow \BAE$) and
  obtains (approximately) her sifted key with \Alice, $\SEA$. (There
  are small deviations between $\SA$ and $\SEA$ since \Eve\ had to
  send $\tildeBE$ instead of $\BE$).
\item[(Se')] \Eve\ determines the string $\BEB$, such that $\BEB_k=1$,
  if $\BE_k=\BB_k$, and $\BEB_k=0$, otherwise. \Eve\ then calculates
  the string $\tildeBEB$, such that the intercepted $t_2$ validates
  $\tildeBEB$ and $d_H(\tildeBEB,\BEB)$ is small. Now \Eve\ performs
  \Alice's step (Se) of the protocol ($\BAB \rightarrow \tildeBEB$),
  i.e.\ she sends (S3').
\item \Bob\ performs step (Sf) of the protocol ($\BAB \rightarrow
  \tildeBEB$), and obtains his sifted key, $\SB$ (there are small
  deviations between $\SB$ and $\SEB$ since \Eve\ had to send
  $\tildeBEB$ instead of $\BEB$).
\end{enumerate}

Note: Now Eve possesses almost identical copies of Alice's and Bob's keys, respectively: $\SA\approx\SEA$ and $\SEB\approx\SB$ (while $\SA$ and $\SB$ will differ in approximately 25\% of the bits due to Eve's quantum intercept-resend attack). The subsequent steps allow Eve to transform her key  $\SEA$ into $\SA$ and make Bob transform his key $\SB$  into a new key $\hatSB$, which she knows:

\begin{enumerate}[(Pa)]
\item  \Alice\ performs step (Pa) of the protocol, i.e.\ she sends (P1). 
\item[(Pb')] \Eve\ performs \Bob's step (Pb) of the protocol, i.e.\ she intercepts (P1) to learn the syndrome $\ECCode_i(\SA)$, and corrects her sifted key $\SEA$ to $\SA$.
\item[(Pa')] \Eve\ performs \Alice's step (Pa) of the protocol, but modifies her key $\SEB$ such  that $\ECCode^\Eve(\SEB)$ will allow \Bob\ to correct his sifted key to the modified $\SEB$ and that the resulting (P1'), i.e.\ $\TE=(i,\ECCode_i(\SEB),\COCode,\COCode(\SEB))$, is compatible with tag $t_3$. \Eve\ sends (P1'). 
\item \Bob\ performs step (Pb) of the protocol, i.e.\ he corrects his sifted key $\SB$ and obtains $\hatSB$. Now Eve shares $\SA$ with Alice, and ${\hatSB}$ with Bob.
\item \Alice\ performs step (Pc) of the protocol, i.e.\ she determines a privacy amplification function $\PA$, applies it to her sifted key, and obtains $\KA=\PA(\SA)$. \Alice\ sends (P3). 
\item[(Pc')] \Eve\ intercepts (P3) to learn the privacy amplification function $\PA$ and thus \Alice's final key $\KA$. \Eve\ calculates another PA function $\PE$ such that $\PE(\hatSB)=\KA$ and that (P3') is compatible with tag $t_5$. 
\footnote{In \ref{app:collprob2} we demonstrate that for typical scenarios the probability that in step (Pc') a useful PA function $\PE$ for Eve exists is almost one.}
\item \Bob\ performs step (Pd) of the protocol, i.e.\ he applies
  $\PE$ and gets $\KB=\PE(\hatSB)=\KA$.
\end{enumerate}
\end{enumerate}
Again, Eve managed to break the protocol completely, as she knows Alice's and Bob's shared ``secret'' key.

\subsection{Implications of protocol modifications on the presented attacks}
\label{sec:prot-modifications}

\subsubsection{No separate step for transmitting the privacy amplification function}
In \cite[p.~83]{GH00} it has been proposed that the privacy amplification function $\PA$ is not transmitted in a separate protocol step (our step (P3)), but can be constructed from previously exchanged basis information (\cite{Peev2005} uses this method to counter the attack described in \cite{BMQ05}). However, no strict security proof of the resulting protocol has ever been put forward.

For the discussed two-step authentication our attack against protocol 1 still works without step (P3) since we don't attack the post processing step at all. Also the attack against protocol 3 still works without step (P3), but is not so powerful. Since Eve has complete knowledge of the basis information, she can just apply the respective PA function individually to her keys with Alice and Bob. Consequently, Eve will know Alice's and Bob's final keys which will be, however, different.

The case of protocol 2 is slightly more complicated but the outcome is identical to that of protocol 3. In this case the last communication message from Alice to Bob is (P1), and, naturally, it has  to be extended to carry also the authentication tag $t_2=\MAC{K_2}{M^\Alice }$, whereby now $M^\Alice=(\BA,\TA)$. Eve will have to modify her attack. Now she has to look for an error correction syndrome $\TE$, so that $M^\Eve=(\BE,\TE)$, collides with $M^\Alice$ under the inner authentication hash function $f$, i.e $f(M^\Eve)=f(M^\Alice)$. To do so Eve is free to modify her sifted key $\SEB \rightarrow \tildeSEB$, so that $\TE=(i,\ECCode_i(\tildeSEB),\COCode,\COCode(\tildeSEB))$ would ensure the required collision. As in the case of protocol 3 Eve has complete knowledge of the bases of Alice and Bob. She can again apply the respective PA functions independently and obtain the final keys of Alice and Bob, which differ one from the other.

\subsubsection{One-time pad encryption of the error correction syndrome}

Ref.~\cite{Lutkenhaus1999} presented a protocol in which parity bits are encrypted with a one-time pad (using key that is preshared or generated in previous rounds). 
Since Alice and Bob use in addition a (large) key which is not known to Eve, one could expect that attacks will be impossible.
Nevertheless, we will briefly outline modified attacks against such a protocol.

If Eve uses a quantum memory in her attack she will learn Alice's complete sifted key. Therefore, she can calculate the exact syndrome, that Alice will OTP-encrypt and send. From the plain and encrypted syndrome, Eve gets the one-time pad, encrypts \emph{her} syndrome with it and continues the attack.

If Eve performs an attack without quantum memory, her
and Alice's sifted key will differ in a small number of bits (the Hamming distance $w$ of the two keys), the positions of which are known to Eve. Thus Eve can create the set of all possible sifted keys of Alice of size $2^w$, which is only a very small subset of all possible keys of length approximately $n/2$, and is also smaller than the set of all possible message tags. Then Eve decides randomly to take one element of this set to be Alice's sifted key. Compared to a guess without previous knowledge she could dramatically increase her chances of guessing correctly, although the probability is still quite low, i.e.\ $p=2^{-w}$. Assuming she has guessed correctly, she can now calculate the syndrome that Alice has sent, and thus get also the one-time pad.  She uses it then for encrypting the syndrome that she sends to Bob.



\subsection{Overview of attack approaches for adversaries with and without quantum memory}

\newcommand{\mycase}[1]{(Case~#1)}

\newcommand{\IAQ}{
\hyperref[sec:prot1-attack]{\emph{Protocol 1--Interleaving attack:}} \newline
\Eve\ stores $\sA$ in quantum memory, \newline
\Eve\ sends random $\sE(\BE,\IE)$ to \Bob, \newline
\Eve\ substitutes $\tildeBE$ for $\BA$, \newline
\Eve\ measures $\sA$ in $\BA$ and learns $\IA$, \newline
\Eve\ calculates $\tildeBAE$ to force \mbox{$\SEA\approx\SEB$}, \newline
\Eve\ substitutes $\tildeBAE$ for $\BEB$.
} 
\newcommand{\IAQs}{
\fbox{$\SA=\SEA\boldsymbol{\approx} \SEB\approx\SB$\vphantom{$\not\approx$}} \mycase{1}
}

\newcommand{\DAQ}{
\hyperref[sec:prot2-attack]{\emph{Protocol 2--Interleaving attack:}} \newline
\Eve\ stores $\sA$ in quantum memory, \newline
\Eve\ sends random $\sE(\BE,\IE)$ to \Bob, \newline
\Eve\ substitutes $\BE$ for $\BA$, \newline
\Eve\ measures $\sA$ in $\BA$ and learns $\IA$, \newline
\Eve\ listens to $\BEB$ (no substitution!).
}
\newcommand{\DAQs}{
\fbox{$\SA=\SEA\not\approx \SEB=\SB$} \mycase{2}
}

\newcommand{\IAnQ}{
\emph{Intercept-resend attack:} \newline
\Eve\ measures $\sA$ in $\BE$ and gets $\IE$, \newline
\Eve\ sends $\sE(\BE,\IE)$ to \Bob, \newline
\Eve\ substitutes $\tildeBE$ for $\BA$, \newline
\Eve\ substitutes $\tildeBAE$ for $\BAB$.
}
\newcommand{\IAnQs}{
\fbox{$\SA\approx\SEA\not\approx \SEB\approx\SB$} \mycase{3}
}

\newcommand{\DAnQ}{
\emph{One-sided intercept-resend attack:} \newline
\Eve\ measures $\sA$ in $\BE$ and gets $\IE$, \newline
\Eve\ sends $\sE(\BE,\IE)$ to \Bob, \newline
\Eve\ substitutes $\BE$ for $\BA$, \newline
\Eve\ listens to $\BEB$ (no substitution!).
}
\newcommand{\DAnQs}{
\fbox{$\SA\not\approx\SEA\not\approx \SEB=\SB$} \mycase{4}
}

\newcommand{\IBQ}{
\emph{Interleaving attack:} \newline
\Eve\ stores $\sA$ in quantum memory, \newline
\Eve\ sends random $\sE(\BE,\IE)$ to \Bob, \newline
\Eve\ listens to $\BB$ (no substitution!), \newline
\Eve\ measures $\sA$ in $\BB$, \newline
\Eve\ listens to $\BAB$, determines $\SEA$, \newline
\Eve\ substitutes $\tildeBEB$ for $\BAB$.
}
\newcommand{\IBQs}{
\fbox{$\SA=\SEA\not\approx \SEB\approx\SB$} \mycase{3}
}

\newcommand{\DBQ}{
\emph{Interleaving attack:} \newline
\Eve\ stores $\sA$ in quantum memory, \newline
\Eve\ sends random $\sE(\BE,\IE)$ to \Bob, \newline
\Eve\ listens to $\BB$ (no substitution!), \newline
\Eve\ measures $\sA$ in $\BB$, \newline
\Eve\ listens to $\BAB$, determines $\SEA$, \newline
\Eve\ substitutes $\BEB$ for $\BAB$.
}
\newcommand{\DBQs}{
\fbox{$\SA=\SEA\not\approx \SEB=\SB$} \mycase{2}
}

\newcommand{\IBnQ}{
\hyperref[sec:prot3-attack]{\emph{Protocol 3--Intercept-resend attack:}} \newline
\Eve\ measures $\sA$ in $\BE$ and gets $\IE$, \newline
\Eve\ sends $\sE(\BE,\IE)$ to \Bob, \newline
\Eve\ substitutes $\tildeBE$ for $\BB$, \newline
\Eve\ substitutes $\tildeBEB$ for $\BAB$.
}

\newcommand{\IBnQs}{
\fbox{$\SA\approx\SEA\not\approx \SEB\approx\SB$} \mycase{3}
}

\newcommand{\DBnQ}{
\emph{One-sided intercept-resend attack:} \newline
\Eve\ measures $\sA$ in $\BE$ and gets $\IE$, \newline
\Eve\ sends $\sE(\BE,\IE)$ to \Bob, \newline
\Eve\ listens to $\BB$ (no substitution!), \newline
\Eve\ substitutes $\BEB$ for $\BAB$.
}
\newcommand{\DBnQs}{
\fbox{\mbox{$\SA\not\approx\SEA\not\approx \SEB=\SB$}} \mycase{4}
}

\newcommand{\VertAlice}{\rotatebox{90}{\Alice\ sends bases \hspace{4cm}}}
\newcommand{\VertBob}{\rotatebox{90}{\Bob\ sends bases \hspace{2.5cm}}}
\newcommand{\VertQM}{Y}
\newcommand{\VertnoQM}{N}
\begin{table}[p]

\caption{Overview of attacks against the sifting stage of different protocol variants. QM denotes whether Eve uses a quantum memory in her attack. The notation ``Protocol 1--3'' refers to the protocols and corresponding attacks described in full detail above. $\sE(\BE,\IE)$ denotes the quantum state which encodes the bit string $\IE$ in bases $\BE$. $\approx$ denotes that two sifted keys deviate only weakly (error correction can reconcile them). $\not\approx$ denotes a deviation of two sifted keys by typically 25\%. If not otherwise stated, \Eve\ performs sifting with the appropriate bases of \Alice\ and \Bob.}
\label{table:sifting-attacks}
\begin{tabularx}{\textwidth}{l|l|X|X}
\br
 & QM & Immediate Authentication & Delayed Authentication$^\dag$ \\
\mr
\multirow{2}{*}{\VertAlice} & \VertQM & \IAQ & \DAQ \\
& & \IAQs & \DAQs \\ \cmidrule{2-4}
 & \VertnoQM & \IAnQ & \DAnQ \\
& & \IAnQs & \DAnQs \\ \mr
\multirow{2}{*}{\VertBob} & \VertQM & \IBQ & \DBQ \\
& & \IBQs & \DBQs \\ \cmidrule{2-4}
 & \VertnoQM & \IBnQ & \DBnQ \\
& & \IBnQs & \DBnQs \\
\br
\end{tabularx}
\noindent{\footnotesize $^\dag$In these cases \Eve\ does not substitute
messages from \Bob\ to \Alice.}
\end{table}

\newcommand{\PCaseI}{
\IAQs \newline
\noindent
\begin{tabular}{l@{\hspace{.8em}}l}
EC: & \Eve\ does nothing.
\end{tabular} \newline \newline \newline
result: \fbox{$\SA=\SEA=\SB\vphantom{\not\approx}$} \newline
\begin{tabular}{l@{\hspace{.8em}}l}
PA: & \Eve\ listens to the PA function $\PA$, \\
& \Eve\ calculates $\KE:=\PA(\SEA)$. \\
\end{tabular} \newline \newline \newline \newline
result: \fbox{$\KA=\KE=\KB$}
}

\newcommand{\PCaseII}
{
\DAQs \newline
\begin{tabular}{l@{\hspace{.8em}}l}
EC: & \Eve\ intercepts $\TA$, \\
& \Eve\ calculates $\TE$, \\
& \Eve\ sends $\TE$ to \Bob. \\
\end{tabular} \newline
result: \fbox{$\SA=\SEA\not\approx \SEB=\SB$} \newline
\begin{tabular}{l@{\hspace{.8em}}l}
PA: & \Eve\ intercepts the PA function $\PA$, \\
& \Eve\ calculates $\KE:=\PA(\SEA)$, \\
& \Eve\ calculates new PA function $\PE$, \\
& \Eve\ sends $\PE$ to \Bob.
\end{tabular} \newline
result:
\fbox{$\KA=\KE=\KB$%
}
}

\newcommand{\PCaseIII}{
\IAnQs \newline
\begin{tabular}{l@{\hspace{.8em}}l}
EC: & \Eve\ intercepts $\TA$, \\
& \Eve\ corrects $\SEA$, obtains $\SA$, \\
& \Eve\ modifies $\SEB$, calculates $\TE$, \\
& \Eve\ sends $\TE$ to \Bob. \\
\end{tabular} \newline
result: \fbox{$\SA=\SEA\not\approx \SEB=\SB$} \newline
\begin{tabular}{l@{\hspace{.8em}}l}
PA: & \Eve\ intercepts the PA function $\PA$, \\
& \Eve\ calculates $\KE:=\PA(\SEA)$, \\
& \Eve\ calculates new PA function $\PE$, \\
& \Eve\ sends $\PE$ to \Bob.
\end{tabular} \newline
result:
\fbox{%
$\KA=\KE=\KB\vphantom{\ne}$%
}
}

\newcommand{\PCaseIV}{
\DAnQs \newline \noindent
\begin{tabular}{l@{\hspace{.8em}}l}
EC: & \Eve\ intercepts $\TA$, \\
& \Eve\ calculates $\TE$, \\
& \Eve\ sends $\TE$ to \Bob. \\
\end{tabular} \newline \newline \newline
result: \fbox{$\SA\not\approx\SEA\not\approx \SEB=\SB$} \newline
\begin{tabular}{l@{\hspace{.8em}}l}
PA: & \Eve\ intercepts the PA function $\PA$, \\
& \Eve\ calculates $\KE:=\PA(\SEA)$, \\
& \Eve\ calculates new PA function $\PE$, \\
& \Eve\ sends $\PE$ to \Bob.
\end{tabular} \newline
result:
\fbox{%
$\KA\ne\KE=\KB$%
}
}

\begin{table}[tp]
\caption{Overview of four attack classes against the protocol stages after sifting. The attacks pertain to the output of sifting, which according to Table \ref{table:sifting-attacks}, yields four different types of correlations between the sifted keys of \Alice, \Eve, and \Bob: two for immediate (cases 1, 3) and two for delayed authentication (cases 2, 4), respectively. Note, that for the sake of simplicity we do not use the ``hat'' notation for error corrected keys.}
\label{table:post-processing-attacks}
\begin{tabularx}{\textwidth}{X|X}
\br
Immediate Authentication & Delayed Authentication$^\dag$ \\
\mr
\PCaseI & \PCaseII \\
\mr
\PCaseIII & \PCaseIV \\
\br
\end{tabularx}
\noindent{\footnotesize $^\dag$In these cases \Eve\ does not substitute 
messages from \Bob\ to \Alice.}
\end{table}
%

%


Up to now we have presented three attacks in which Eve on receiving a protocol message from Alice (Bob) sends either the original message or a modified one to Bob (Alice). In Sec.~\ref{sec:2ndattack2} we will present a different kind of attack. The attacks presented so far are not isolated cases of adversary success strategies in the case of weak authentication that uses the approach of Ref.~\cite{Peev2005}. The attacks are actually made up of building blocks that can be combined and applied in a wide variety of settings. We illustrate this fact by presenting a systematic overview of successful attacks against a range of protocols comprising the cases of sifting being started by Alice or Bob, authentication being immediate or delayed. Moreover for all the cases we distinguish between two levels of adversary resources: i) ``classical only'', i.e.\ sufficiently high computing power or ii) ``quantum and classical'', i.e.\ a combination of quantum resources (quantum memory) and classical ones (as in i)).
These attacks are summarized in Tables \ref{table:sifting-attacks} and \ref{table:post-processing-attacks}. 
The attacks are not described in full detail and the tables focus on the adversary activities alone. The full attacks,  can however be easily deduced by comparing the table contents referring to Attacks 1, 2 and 3 with the detailed description for these cases, given above.

Furthermore, using arguments similar to those presented in Section \ref{sec:prot-modifications} one can construct attacks against modified versions of these protocols, including encryption of error-correction information and reuse of common, sifting-stage randomness for privacy amplification without communication.

\subsection{Another attack against Protocol 2 (Eve attacks in both directions)}\label{sec:2ndattack2}

In our previous attacks Eve substitutes certain messages but sticks to the original message order of the protocol.
In the following attack
Eve exchanges a sequence of messages with Alice first. When she needs to send an authentication tag to Alice, she starts her communication with Bob and continues until she obtains the necessary tag from him. Then Eve continues her communication with Alice.

In contrast to the previous attack against protocol 2 (cf.~Sec.~\ref{sec:prot2-attack}) this attack allows Eve to modify also messages that are sent to Alice.

\begin{enumerate}[1.]
\item \Attackmessages Let $t_1:=\MAC{K_1}{M^\Bob}$, and remember that $t_2:=\MAC{K_2}{M^\Alice}$.
  \begin{enumerate}
  \item[(S1)] $\Alice \stackrel{\qchan}{\longrightarrow} \Eve: \quad \sA$
  \item[(S2')] $\Alice \stackrel{\cchan}{\longleftarrow} \Eve: \quad n^\Eve$
  \item[(S3)] $\Alice \stackrel{\cchan}{\longrightarrow} \Eve: \quad \BA$
  \item[(S1')] $\Eve \stackrel{\qchan}{\longrightarrow} \Bob: \quad \sE=\sA$
  \item[(S2)] $\Eve \stackrel{\cchan}{\longleftarrow} \Bob: \quad n^\Bob$
  \item[(S3')] $\Eve \stackrel{\cchan}{\longrightarrow} \Bob: \quad \BE=\BA$
  \item[(S4)] $\Eve \stackrel{\cchan}{\longleftarrow} \Bob: \quad \BEB$
  \item[(P1')] $\Eve \stackrel{\cchan}{\longrightarrow} \Bob: \quad \TE$
  \item[(P2)] $\Eve \stackrel{\cchan}{\longleftarrow} \Bob: \quad {\errorr}\ /\ \fail,t_1$
  \item[(S4')] $\Alice \stackrel{\cchan}{\longleftarrow} \Eve: \quad \tilde{\Bases}^{\Eve=\Bob}$
  \item[(P1)] $\Alice \stackrel{\cchan}{\longrightarrow} \Eve: \quad \TA$
  \item[(P2')] $\Alice \stackrel{\cchan}{\longleftarrow} \Eve: \quad {\errorr}\ /\ \fail,t_1$
  \item[(P3)] $\Alice \stackrel{\cchan}{\longrightarrow} \Eve: \quad \msg{\PA},t_2\ / \ \text{-----}$
  \item[(P3')] $\Eve \stackrel{\cchan}{\longrightarrow} \Bob: \quad \msg{\PE},t_2\ / \ \text{-----}$
  \end{enumerate}
 \item \Attackactions 
   \begin{enumerate}
   \item[(Sa)] \Alice\ performs step (Sa) of the protocol (prepares $\sA$ and sends it in (S1)).
   \item[(Sc')] \Eve\ intercepts (S1) from \Alice\ and stores $\sA$ in her quantum memory. \Eve\ sends an arbitrary number $n^\Eve$ (S2') to $\Alice$ to trigger $\Alice$'s next message.
   \item[(Sd)] \Alice\ performs step (Sd) of the protocol: she sends (S3), i.e.~$\BA$.
   \item[(Sd')] \Eve\ intercepts (S3), measures $\sA$ in \Alice's preparation bases $\BA$, and obtains $\Alice$'s rawkey $\IA$.
   \item[(Sa')] Using $\IA$ and $\BA$, \Eve\ prepares an identical copy of $\sA$ and sends it (S1') to \Bob.
   \item[(Sb)], (Sc), (Sd''), (Se), (Sf') \Eve\ (instead of \Alice) and \Bob\ follow the protocol--whereby sending (S2), (S3'), (S4)--until they obtain their sifted keys $\SE\approx\SB$.
   \item[(Pa')], (Pb) \Eve\ (instead of \Alice) and \Bob\ follow the protocol--whereby sending (P1'),(P2)--and reconcile their sifted keys. 
   \end{enumerate}
On receiving (P2) \Eve\ has learned $M^\Bob$ and the tag $t_1$ and can now continue her communication with \Alice.
   \begin{enumerate}
   \item[(Se')] \Eve\ calculates a message $\tildeBEB$ such that (i) it is close to $\BEB$ and (ii) $M^{\Alice \leftarrow \Eve}:=(\nE,\tildeBEB,\errorr / \fail)$ collides with $M^\Bob$ under the inner hash function $f$, i.e.~$f(M^{\Alice \leftarrow \Eve})=f(M^\Bob)$. \Eve\ sends $\tildeBEB$ to \Alice\ (S4').
   \item[(Sf)], (Pa) \Alice\ calculates her sifted key $\SA$, and sends (P1). 
   \item[(Pb')] \Eve\ intercepts (P1) and can correct small errors introduced during quantum storage or measurement of $\sA$. Using the original tag $t_1$, \Eve\ forwards (P2')=(P2) to \Alice.
   \item[(Pc)] Since $f(M^{\Alice \leftarrow \Eve})=f(M^\Bob)$, \Alice\ accepts the message as authentic, calculates $\PA$ and $\KA=\PA(\SA)$, and sends (P3) with tag $t_2$.
   \item[(Pc')] \Eve\ calculates a PA function $\PE$ (and obtains $\KE=\PE(\hatSB)$) such that (i) $\PE(\hatSB)=\KA$, and (ii) $M^{\Eve \rightarrow \Bob}:=(\BE,\TE,\PE)$ collides with $M^{\Alice}$ under $f$. \Eve\ calculates $\PE(\hatSB)$, and sends (P3') with tag $t_2$ to \Bob.
   \item[(Pd)] \Bob\ calculates $\KB=\PE(\hatSB)$. 
\end{enumerate}
Eve shares a common ``secret'' key with Alice and Bob. 
In case that \Eve\ cannot achieve condition (i) in step (Pc') she will get two individual keys with \Alice\ and \Bob.
In both cases, protocol 2 is completely broken by the presented attack.
\end{enumerate}

\subsection{Discussion of attacks}

The degree of success of the eavesdropper varies from protocol to protocol and ranges from a complete three party identity of the generated key -- $\KA=\KE=\KB$, to ``separate worlds'' outcome -- $\KA=\KEA \neq \KEB=\KB$ (e.g. in a case of privacy amplification with no communication), to a successful attack over one of the legitimate parties (calling for a subsequent isolation of the other)-- i.e.\ $\KA \neq \KE=\KB$. Moreover the success can be achieved either deterministically  or sometimes only probabilistically as in certain cases of encrypted transmission of error correction information.


This analysis underlines what was already mentioned in Section \ref{sec:intro-attacks}. As the attack mechanism fundamentally requires finding hash collisions of the internal authentication function that are \emph{useful} to the eavesdropper, the different protocol versions discussed above, allow inequivalent optimal adversarial approaches.
As it is to be expected, the  availability of quantum resources simplifies the task of the eavesdropper but does not automatically lead to more powerful attacks. On the other hand immediate authentication also provides a leverage to the attacker as she does not have to correlate all her actions across the post-processing chain. 
This gives the somewhat counter-intuitive observation, that \emph{fewer authentication tags} result in more difficulty for the attacker if he wants to keep the original message order! Furthermore sifting initiated by Bob also poses more difficulties to Eve as she can not learn the full information of Alice as is in the opposite case. Finally if part of the postprocessing information remains unknown to the eavesdropper, as in the case of encrypted reconciliation, then a deterministically successful attack strategy is not always guaranteed.

With all this said it must be underlined that Eve can find \emph{useful collisions} only if she can fake the protocol communication by hiding her modifications in the typically available random degrees of freedom. If such are unavailable or strongly reduced (as e.g. in the case of protocols with delayed authentication or with communication-less privacy amplification) the room for attack is narrowed resulting in a number of cases in ``separate world'' or even ``one-sided'' adversarial success. Still in all discussed cases there always exists an attack strategy that renders the corresponding protocol version insecure.

%


\section{Countermeasures}\label{sec:Countermeasures}

We will now propose a countermeasure that mitigates or, at a cost,
prohibits the attacks exemplified in the previous section. One could
consider encrypting parts of the communication between Alice and Bob
\cite{Abidin2009,Lutkenhaus1999}, but we will concentrate on
strengthening the two-step authentication below. As we shall see,
there are a number of possibilities ranging from increasing Eve's need
for large computational power, all the way to information-theoretic
security.  As can be expected, the cost of this security improvement
comes in the form of an increased secret key consumption.

Let us first consider the main enabler of the attacks presented in the
previous section. The reason that the attacks are possible is that
when Eve receives (or intercepts) Alice's message, she can immediately
check if her message $m^\Eve$ coincides with Alice's under the
publicly known hash function $f$. If not, Eve is free to choose
another message $\tilde{m}^\Eve$ that does coincide with Alice's under
$f$, although in some situations there is a small price to pay as
described above.  To prohibit this we should make it difficult or
impossible for Eve to check for this coincidence.  The essence of our
proposed countermeasure is to use an extra bitsequence to make the
output of the public hash function difficult to predict, or even
secret, to Eve.  This is done in the following way: prepend an extra
bitsequence $S$ to the message and authenticate the result.  Instead
of using the tag $t=g_K(m)=h_K(f(m))$, use the tag
$t=g_K(S||m)=h_K(f(S||m))$. If, for example, $S$ is random and secret
to Eve, then the output $f(S||m)$ will also be secret to Eve, and she
will not be able to search for coincidences in the above manner.

It should be stressed that $S$ should be prepended to the message
before applying $f$. The bitsequence $S$ should \emph{not} be
concatenated with $f(m)$. The reason for this is fairly obvious. If
$S$ is concatenated with $f(m)$ so that $t=h_K(S||f(m))$ or
$t=h_K(f(m)||S)$, then Eve can still apply her original attack
strategy. All Eve needs in this case is still to find a message that
collides with Alice's message under $f$.  We should also stress that
for certain classes of hash functions, prepending $S$ to the message
has advantages over appending to $m$ (so that $t=h_K(f(m||S))$).  When
using iterative hash functions like \textbf{SHA-1} to calculate
$f(m||S)$, Eve can ignore $S$ and search instead for a message
$m^\prime$ such that $f(m^\prime)=f(m)$. This is known as a
partial-message collision attack, see Chapter 5 in Ref.~\cite{FSK}.
If $f$ is computed iteratively, $f(m^\prime)=f(m)$ will automatically
give $f(m^\prime||S)=f(m||S)$ (with appropriate block lengths).
This is prohibited by  prepending $S$ instead, to use $f(S||m)$.

Of course, a random secret $S$ would consume secret key, and this may
not be desirable. Selecting $S$ can be done in a few ways, and these
are the alternatives (including a random secret $S$):
\begin{description}
\item[A salt,] a random but fixed public bitstring, per device or per
  link. This would not make Eve's task much harder, but it would help
  a little in certain situations: for some messages, such as
  preparation and/or measurement settings, Eve does not need to use a
  random bitstring. She can use a fixed (random-looking) bitstring and
  for that message, a pre-calculated table of messages with low
  Hamming distance and their corresponding intermediate tags
  \cite{Abidin2009}.  Even though a full table might have an excessive
  number of entries ($2^{256}$ is a large number), a partial table
  could ease Eve's calculational load (as in a rainbow table), or
  alternatively increase her probability of success.  A salt would
  force Eve to create the table anew for each device or link.
\item[A nonce,] a random public bitstring, per authentication
  attempt. This may seem like a big improvement because it seems Eve
  cannot use a pre-calculated table, forcing her to make the
  calculations online.  However, the nonce needs to be transmitted
  from Alice to Bob or vice versa, and is not separately
  authenticated, since this would need secret key better used
  elsewhere.  A nonce can therefore be changed in transit by Eve, and
  this increases her possibilities. Authenticating a message from
  Alice to Bob, there are two sub-alternatives:
  \begin{enumerate}[a)]
  \item The nonce is generated by Alice and sent to Bob together with
    the tag, and Eve can change it in transit.
  \item The nonce is generated by Bob and sent to Alice after he has
    received the message. One alternative for Eve is to commit to a
    message so that she can receive the nonce from Bob, and then
    change the nonce in transit. In effect, she can now change Alice's
    message since that contains the nonce.
  \end{enumerate}
  In both cases, Eve needs to find a collision online, but Eve now has
  a message part that she can change to any value she desires.
  Therefore, her attack is easier in this setup, not more difficult.
\item[Fixed secret key,] a random but fixed secret bitstring,
  per device or per link.  In this case, Eve cannot apply the previous
  attack on the authentication, because she cannot check for
  collisions directly since $f(S||m)$ is secret to her. To search for a
  message $m^\prime$ useful to Eve (i.e., having low Hamming
  distance to $\mE$) such that $f(S||m^\prime) = f(S||m)$ has maximal
  probability (given the distribution of $S$) is computationally very
  costly.  Moreover, we expect this maximal probability to be very
  low, but an upper bound is difficult to obtain and depends on
  details of the
  hash function, see below.

  As regards using a fixed secret \cite{AJA}, if Eve has partial
  knowledge, no matter how small, on the secret key $K$ identifying
  $h_K$, this information will accumulate over the rounds as
  information on $S$.  Remember that after the initial pre-shared key
  is used up, $K$ will consist of QKD-generated key that is
  $\epsilon$-perfect (the trace distance between the probability
  distribution of the key and the uniform distribution is $\epsilon$),
  where $\epsilon$ is nonzero. Therefore, after a large number of rounds,
  this reduces to using a random fixed public bitstring (salt) as
  discussed above.
\item[Secret key,] a random secret bitstring, per
  authentication attempt.  Here also, Eve cannot apply the previous
  attack on the authentication, because she cannot check for
  collisions directly since $f(S||m)$ is secret to her.  
  The situation is almost identical to the fixed secret key case but Eve's
  task is even harder as she cannot accumulate information on $S$.
\end{description}

This countermeasure is simple to implement, and the last alternative
above seems preferable, if only the key consumption is low. Choosing
$S$ to be of the same size as the tag gives a high computational load
on Eve, and is efficient in terms of key consumption. It is, however,
difficult to estimate the probability of success for Eve, if she has
large computational power.

Let us examine what conditions need to be fulfilled to make the
two-step authentication ITS. If the last alternative above is used, it
is clear that we want a low probability of collision for a random
value of $S$. And this is obtained if two distinct messages collide
under $f$ only for a small number of values of $S$. More formally, let
$\mS$ be the set of values of $S$. Then, if for any two distinct $m_1$,
$m_2\in\mM$ $|\{S\in\mS: f(S||m_1) = f(S||m_2)\}|\leq
\epsilon'|\mS|$, we automatically have a low collision probability.  A
close look at the above condition would tell us that it is precisely
the condition for a family of hash functions indexed by $S$ to be
$\epsilon'$-AU$_2$ (see \ref{uhf}).  The following theorem states that
this condition is necessary and sufficient to restore ITS.

\begin{theorem}\label{thm:fix}
  Let $\mM$, $\mZ$ and $\mT$ be finite sets. Let $\mF$ be a family of
  hash functions from $\mM$ to $\mZ$, $\mH$ a family of SU$_2$ hash
  functions from $\mZ$ to $\mT$, and $\mG := \mH\circ\mF$, where
  $\circ$ stands for element-wise composition. Then $\mG$ is
  $\epsilon$-ASU$_2$ if and only if $\mF$ is $\epsilon'$-AU$_2$, where
  $\epsilon=\epsilon'(1-1/|\mT|)+1/|\mT|$.
\end{theorem}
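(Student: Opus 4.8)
The plan is to unwind the three definitions involved (SU$_2$ for $\mH$, $\epsilon$-ASU$_2$ for $\mG$, and $\epsilon'$-AU$_2$ for $\mF$; see \ref{uhf}), and then to reduce everything to a single explicit computation: the joint probability $\Pr_{h,f}[h(f(m_1))=t_1 \wedge h(f(m_2))=t_2]$ for distinct $m_1,m_2\in\mM$, expressed in terms of the $\mF$-collision probability. Throughout I would treat $\mG$ as indexed by independent uniform pairs $(h,f)\in\mH\times\mF$ with $g_{(h,f)}=h\circ f$, which is what ``element-wise composition'' means here.

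First I would dispose of the uniformity half of the ASU$_2$ definition. For any fixed $m$ and $t$, conditioning on $f$ fixes the value $z=f(m)\in\mZ$, and the SU$_2$ property of $\mH$ gives $\Pr_h[h(z)=t]=1/|\mT|$ independently of $z$; averaging over $f$ leaves $\Pr_{h,f}[g(m)=t]=1/|\mT|$. Hence the first ASU$_2$ requirement holds for $\mG$ regardless of $\mF$, and only the second (the substitution/collision bound) can carry any dependence on $\mF$.

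The core step conditions on whether $f$ collides on the pair. Write $p:=\Pr_f[f(m_1)=f(m_2)]$. If $f(m_1)\neq f(m_2)$, the SU$_2$ property of $\mH$ makes the two tags jointly uniform, contributing $1/|\mT|^2$; if $f(m_1)=f(m_2)$ then $g(m_1)=g(m_2)$ deterministically, so the event has probability $1/|\mT|$ when $t_1=t_2$ and $0$ when $t_1\neq t_2$. Averaging, the joint probability equals $p/|\mT|+(1-p)/|\mT|^2$ when $t_1=t_2$ and $(1-p)/|\mT|^2$ otherwise. The binding case is $t_1=t_2$; dividing by $\Pr[g(m_1)=t_1]=1/|\mT|$ yields the conditional substitution probability $p(1-1/|\mT|)+1/|\mT|$, and I would note that the $t_1\neq t_2$ case is dominated by this since $\epsilon\geq 1/|\mT|$.

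Both implications then fall out of this identity. For the ``if'' direction, $\mF$ being $\epsilon'$-AU$_2$ means $p\leq\epsilon'$ for every distinct pair; since $x\mapsto x(1-1/|\mT|)+1/|\mT|$ is increasing, the conditional probability is at most $\epsilon'(1-1/|\mT|)+1/|\mT|=\epsilon$, so $\mG$ is $\epsilon$-ASU$_2$. Conversely, if $\mG$ is $\epsilon$-ASU$_2$, then taking $t_1=t_2$ for each distinct pair forces $p(1-1/|\mT|)+1/|\mT|\leq\epsilon=\epsilon'(1-1/|\mT|)+1/|\mT|$, hence $p\leq\epsilon'$ for all such pairs, i.e.\ $\mF$ is $\epsilon'$-AU$_2$. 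I expect the only real care needed is bookkeeping around the precise form of the ASU$_2$ definition and the independence of the choices of $h$ and $f$; inverting the affine relation between $\epsilon$ and $\epsilon'$ is then immediate.
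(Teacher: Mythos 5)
Your proposal is correct and follows essentially the same route as the paper: the paper's proof is the counting version of your probabilistic argument, splitting the sum over intermediate values into the collision case $f(m_1)=f(m_2)$ and the non-collision case to obtain the exact identity $|\{g:g(m_1)=t_1,g(m_2)=t_2\}|=|\{f:f(m_1)=f(m_2)\}|\bigl(\delta_{t_1,t_2}-1/|\mT|\bigr)|\mH|/|\mT|+|\mG|/|\mT|^2$, which is your joint-probability formula multiplied by $|\mG|$. The equivalence in both directions is then read off from this identity exactly as you do, with the first ASU$_2$ condition verified independently of $\mF$ in the same way.
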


The proof can be found in \ref{uhf}. Thus, to make the two-step
authentication ITS, we should construct our fixed public hash function
$f$ with the help of an AU$_2$ hash function family $\mF$ as follows:
\begin{equation}
  f(S||m)=f_S(m),\ f_S\in\mF.
\end{equation}
In words, $f$ separates $S$ from the concatenation $S||m$ and uses it
as index to select from the hash function family $\mF$ an individual
member $f_S$ which is applied to the original message $m$.

Theorem~\ref{thm:fix} makes it possible to relate the message length
$\log|\mM|$, the security parameter $\epsilon'$, and the key
consumption of the system. Let us aim for a final $\epsilon$-ASU$_2$
family with $\epsilon=2/|\mT|-1/|\mT|^2$, i.e., $\epsilon'=1/|\mT|$.
Then, the bound by Nguyen and Roscoe \cite{NR} is tight:
\begin{equation}
  |\mF|>|\mT|\big\lceil \log|\mM|/\log|\mZ| -1 \big\rceil.
\label{eq:9}
\end{equation}
In \cite{NR} there are two lower bounds, but both can be written in
this way. The bound applies when
$\epsilon'|\mZ|>1+\log|\mZ|/(\log|\mM|-\log|\mZ|)$. 
The optimal family
\cite{NR} is that of polynomial evaluation hashing over finite fields
\cite{Boer,Johansson,Taylor}. Therefore, using polynomial hashing with
$|\mF|=|\mZ|$, we can authenticate messages as long as
\begin{equation}
  \log|\mM|<\Big(\frac{|\mZ|}{|\mT|} + 1\Big)\log|\mZ|.
\end{equation}
For example, if $|\mZ| = 2^{256}$, $|\mT|=2^{64}$ and
$\epsilon=2^{-63}-2^{-128}$, then messages of length
$\log|\mM|<2^{200}\approx10^{60}$ bits can be authenticated. The
second step of the authentication uses an SU$_2$ hash function
$\mZ\to\mT$, which needs a key of length at least $\log|\mZ|
+\log|\mT|$ bits \cite{Stinson92,Bierbrauer1}. Thus, the total
required key length is $2\log|\mZ| + \log|\mT|$, in this case $576$
bits. By adjusting $|\mZ|$ to the maximum message length, this scheme
can authenticate one terabit (petabit, exabit, zettabit, yottabit) of
data using 260 (280, 298, 318, 338) bits of secret key.

This construction makes the two-stage authentication ITS at the price
of increasing the key consumption slightly. There are other efficient
constructions of ASU$_2$ hash functions as well, see e.g.,
\cite{Krawczyk94, Bierbrauer1, Bierbrauer2, AJA1}. Some of these
authenticate message of arbitrary length with fixed key consumption at
the price of a varying $\epsilon$, while others have fixed $\epsilon$
but varying key consumption. They also vary in terms of their
computational speed.  The numbers are in the same range as the above
presented ITS authentication, and all mentioned schemes are reasonably
efficient.


\section{Conclusions}\label{sec:Conclusions}

The main conclusion of our extensive analysis is: \emph{do not use non-ITS authentication in QKD
if you want to achieve ITS security}. This may sound rather obvious
but nevertheless in our oppinion it is always good to know what exactly goes wrong
if you break the rules.

So, we have presented a comprehensive case study of attacks
that compromise QKD in the non-ITS authentication setting of 
\cite{Peev2005}, that creates message tags by composing an inner public hash function with an outer function from a strongly universal hashing (SU$_2$) family.
From the point of view of the attacker, who is equipped with unbounded
computing resources, this composition has the following properties:
(i) inserting a randomly chosen message or substituting messages with
a randomly chosen message is as hard as in the SU$_2$ case and thus
cannot be used in attacks, (ii) but more interestingly, substituting a
message with another that collides under the public hash function will
always work. As has been shown previously \cite{Peev2005} property (i)
does prohibit straightforward MITM attacks (cf.~Definition
\ref{def:SF-MITM}).

The sophisticated MITM attacks dicussed here capitalize on property
(ii) to successfully target many QKD protocol versions: protocols that
use individual authentication of each message, or that use delayed
authentication of all messages, protocols where Bob sends an
acknowledgement message to trigger Alice's sifting message (containing
her bases choice), or where Bob directly sends his bases choice, see
Tables \ref{table:sifting-attacks} and
\ref{table:post-processing-attacks}. All the attacks are enabled by
the fact that the number of messages that collide with a given
protocol message (or sequence of messages) of typically at least
several hundred bits is extremely huge. Therefore, almost certainly 
(see Sec.~\ref{sec:HashCollisions}) 
there exists at least one colliding message that allows the
eavesdropper to perform her attack. In some attacks Eve needs less
computing resources if she possesses quantum memory.

We stress that the discussed attack pattern is not
restricted to one single instance, the specific authentication
mechanism of Ref.~\cite{Peev2005} that we study here. 
We conjecture, that whenever property (ii) holds, 
i.e.~collisions can be found, and the protocol does not use additional 
secret key \cite{Abidin2009,Peev2009} (e.g.\ for encryption of messages) the 
adversary can compromise the security of the key generated 
by QKD, following an interleaving approach along the lines 
of that discussed in this paper.

The countermeasures discussed in this paper use more secret key,
specifically to prevent finding collisions.  Prepending secret key
material to the message, before applying the public hash function,
will increase the computational resources needed for a successful
attack substantially, at a low cost in terms of key
material. 

Furthermore, we can achieve 
Universally-Composable Information-Theoretic Security of the 
authentication scheme of \cite{Peev2005} by 
replacing the publicly known hash function with an Almost Universal$_2$ 
function family. 
This requirement is necessary and sufficient for ITS
of the two step authentication; the necessity of this condition is also a new result
of this paper.


\begin{acknowledgements}
This work has been supported by the Vienna Science and Technology
Fund (WWTF) via project ICT10-067 (HiPANQ) and also partly by the
Austrian Research Promotion Agency (FFG) within the project Archistar
(Bridge-2364544).
\end{acknowledgements}

\bibliographystyle{spbasic}      
\bibliography{qkd}

\begin{appendix}

\newcommand{\key}{s}
\newcommand{\randstring}{S}
\newcommand{\sequ}[2]{#1_1|#1_2|\dots|#1_{#2}}

\section{Proof of Lemma \ref{lem:collision1}}\label{ProofCollision1}

\setcounter{lemma}{0}
\begin{lemma} 
Let us assume that $f$ maps all messages in $\setM$ randomly onto $\mZ$.
Then the probability that at least one of the messages in $\setM$ is validated by the given tag $t=h_K(f(\mA))$ is
\begin{equation*}
\Pcoll =\Pr\left\{\exists \mE\in\setM: h_K(f(\mE))=t) \right\} > 1-\exp\left(-|\setM| |\mZ|^{-1}\right).
\end{equation*}
\end{lemma}

\begin{proof}
By assumption,
the probability that $f$ maps any (randomly chosen) message $m$ of $\setM$ onto any fixed value $z$ of $\mZ$ is $1/|\mZ|$:
\begin{equation}
m\in_R \setM, \forall z\in\mZ: \Pr\left\{f(m)=z\right\}=1/|\mZ|.
\end{equation}
Applying $h_K$ to $f(m)$ and $z$ in the argument of $\Pr$ (which potentially increases the value of the probability), setting $z=f(\mA)$, and using $t=h_K(f(\mA))$ we obtain 
\begin{equation}
m\in_R \setM: \Pr\left\{h_K(f(m))=t\right\}\ge 1/|\mZ|.
\end{equation}
Consequently, the probability that $t$ authenticates at least one
message of all $|\setM|$ different messages in $\setM$ is at least
$1-\left(1-|\mZ|^{-1}\right)^{|\setM|}$, and using that $(1-1/n)^{n} <
e^{-1}$ for $n>1$ finishes the proof.  
\end{proof}
If desired, $1/|\mZ|$ can be replaced by any lower bound on the
probability to allow for non-uniform distributions.

\section{Proof of Corollary \ref{cor:collision2}}\label{ProofCollision2}

\setcounter{corollary}{0}
\begin{corollary}
Let $\setB$ be the closed ball of all messages $m$ having a Hamming distance to $m^\Eve$ not exceeding $w$:
\begin{equation*}
\setB=\left\{ m: d_H(m,m^\Eve)\le w
\right\},
\end{equation*}
and let us assume that $f$ maps all messages in $\setB$ randomly onto $\mZ$.
Then the probability that at least one of the messages in $\setB$ is validated by the given tag $t=h_K(f(\mA))$ is
\begin{equation*}
\Pcoll =\Pr\left\{\exists \tilde{m}^\Eve\in\setB: h_K(f(\tilde{m}^\Eve))=t) \right\} > 1-\exp\left(-|\setB| |\mZ|^{-1}\right).
\end{equation*}
For simplicity we can loosen the bound and replace $|\setB|$ by ${\ell \choose w} < |\setB|$, where $\ell$ is the length of the binary message $m^\Eve$.

\end{corollary}

\begin{proof}
The proof follows from Lemma \ref{lem:collision1} by setting $\setM=\setB$.

Finally,
$|\setB|=\sum_{k=0}^{w} {\ell \choose k}> {\ell \choose w}$.
\end{proof}
If desired, $1/|\mZ|$ can be replaced by any lower bound on the
probability to allow for non-uniform distributions.

\section{Calculation of success probability in step (Pc') of attacks 2 and 3} \label{app:collprob2}

\subsection{Attack 2--Toeplitz based hashing}

The probability that step (Pc') is successful, i.e. that a message $\PE$ exists that fulfills $\PE(\SEB)=\KA$ and $t_2=\MAC{K_2}{\BE,\TE,\PE}$, depends on the length of the message $\PE$: a short message $\PE$ means less freedom for Eve to find collisions. 
Currently, owing to its low computational complexity universal hashing with Toeplitz matrices is predominantly used for PA. 
The smallest known Toeplitz matrix based hashing family $\mathcal{H}_T: \bin^n \rightarrow \bin^m$ consists of $2^{n-1}$ different functions \cite{Hayashi2011}. 
Note, that it has $2^m$ different images and that for any $x\ne 0$, $2^{n-m-1}$ different functions exist in $\mathcal{H}_T$ that map $x$ to any $y$. 

Thus we need $\len(\SEB)-1$ bit for the description of a particular function of this family (here $\len(\cdot)$ denotes the length of a binary string) and the size of the set $\setP$ of all PA functions that fulfill the first condition, i.e. $\setP:=\{\PE: \PE(\SEB)=\KA\}$, is given by $|\setP|=2^{\len(\SEB)-\len(\KA)-1}$.


Combining the fixed messages $\BE$ and $\TE$ with all messages from $\setP$ we form the set of all triples $\setM:=\{(\BE,\TE,\PE): \PE \in \setP\}$ that fulfill the first condition. Obviously, $|\setM| = |\setP|$. 
Now Lemma~\ref{lem:collision1} gives us a lower bound for the success probability for finding a message in $\setM$ that collides with $M^\Alice$, i.e. 
$\Pcoll =\Pr\left\{\exists \mE\in\setM: h_{K_2}(f(\mE))=t_2) \right\} > 1-\exp\left(-|\setM|/|\mZ|\right)$.

If we assume again that $|\mZ|=2^{256}$ this means that shrinking the corrected key $\SEB$ by 260 bit or more to obtain the final key $\KA$ gives Eve a chance of $1-\exp(-2^{260-1}/2^{256} =1 -\exp(-8) \ge 0.999$ to find a collision.

\subsection{Attack 3--Toeplitz based hashing}

For attack 3 the argumentation is completely analogous: the only difference is that we directly define $\setM:=\{\PE: \PE(\hatSB)=\KA\}$ and that $t_2$ is replaced with $t_5$.

\subsection{Discussion for other PA functions}

Besides universal hashing other strong randomness extractors can be used for PA. For example, asymptotically, Trevisan's extractor needs a shorter description (seed) to select a particular function. In such a case, it might be impossible for Eve to fulfill both conditions. Nevertheless, she can accept that she shares different keys with Alice and Bob and only search for a $\PE$ such that the MAC is accepted. In that case it would be only necessary that the seed is at least 260 bit long. To the best of our knowledge all strong extractors with useful parameters need much more than 260 bit of seed.

\section{Subsequence problem} \label{app:subsequence}

Eve is given two fixed bit sequences, $\SEB$ (sifted key that Eve wants to achieve) and $\IA$ (the raw key of Alice). Her goal is to find  a \emph{subsequence} of $\IA$ that coincides with $\SE$.

\subsection{Algorithm that finds a subsequence}\label{alg:bases1}


First we give a simple algorithm that takes two
sequences \mbox{$s=\sequ{s}{m}$}, \mbox{$S=\sequ{S}{n}$} as inputs and returns
the index set $\mathcal{J}=\{j_1,\dots,j_m\}=\{j_i: S_{j_i}=s_i\}$ if $s$ is a subsequence of $S$ (denoted $s\preccurlyeq S$). \\
\noindent {\bf Algorithm A} find subsequence($s$, $S$)\\
Input: two non-empty binary sequences $\key$ and $\randstring$.\\
Output: index set $\mathcal{J}$ if $\key$ is a subsequence of $\randstring$, else $\emptyset$.

\begin{tabbing}
\hspace*{1cm}\=\hspace*{1cm}\=\hspace*{4cm}\= \kill
 $i\leftarrow 1$, $j\leftarrow 1$, $m\leftarrow$ length$(s)$, $n\leftarrow$ length$(S)$, $\mathcal{J} \leftarrow \emptyset$ \\
 {\bf do} \\
 \> {\bf if} $\key_i=\randstring_j$ {\bf then} \> \> // we found one bit of $\key$ \\
 \> \> $\mathcal{J}\leftarrow \mathcal{J}\cup \{j\}$ \> // store position \\
 \> \> $i\leftarrow i+1$ \> // compare next bit of $s$ \\
 \>  {\bf endif} \\
 \>  $j\leftarrow j+1$ \> \> // compare next bit of $S$ \\
 {\bf while} $(i\le m$ and $j\le n)$ \> \> \> // neither end of $s$ nor end of $S$ reached \\
 {\bf if} $i\le m$ {\bf then return} $\emptyset$ {\bf endif} \> \> \> // end of $s$ not reached, but end of $S$ reached \\
 {\bf return} $\mathcal{J}$ \\
\end{tabbing}

\subsection{Probability for finding a subsequence in a random sequence}

We assume that both sequences consist of i.i.d.\ Bernoulli trials with $p(0)=p(1)=1/2$ and
calculate the (success) probability that $s\preccurlyeq S$.

$s\preccurlyeq S$ iff $S$ is of the form
\begin{equation}
S=\bar s_1|\dots|\bar s_1|\boldsymbol{s_1}| \bar s_2|\dots|\bar s_2|\boldsymbol{s_2}| \dots| \bar s_m|\dots|\bar s_m|\boldsymbol{s_m}| x_1|x_2|\dots.
\end{equation}
Here, $\bar s_j$ denotes the negation of $s_j$ (written above as $\boldsymbol{s_j}$ to improve readability), while each $x_i$ can independently take value $0$ or $1$. All sequences $\bar s_j|\dots|\bar s_j$ are optional. Let $\mathsf{S}$ be the \emph{number of different} valid sequences, i.e. sequences $S$, that contain $s$ as a subsequence. Obviously $\mathsf{S}$ does not depend on $s$, but only on $m$ and $n$. To calculate $\mathsf{S}$ we can  therefore choose  $s$ to be the all zero sequence of length $m$. Consequently, $\mathsf{S}$ is equal to the number of different binary sequences of length $n$ that contain at least $m$ zeroes.
The success probability
\begin{eqnarray}
\operatorname{Prob}\{s\preccurlyeq S\} =\mathsf{S}/2^n=2^{-n}\sum_{l=m}^n \binom{n}{l}.
\end{eqnarray}

\subsection{Application to Eve's attack}
Note that Eve wants to find the sifted key $\SEB$ in Alice's raw key $\IA$. If both bases are used with equal probability (as in standard symmetric BB84), then $m\approx n/2$. Obviously,
\begin{eqnarray}
\operatorname{Prob}\{s\preccurlyeq S\} >\frac{1}{2} \Leftarrow m \le \lfloor n/2\rfloor.
\end{eqnarray}

However, it is not necessary, that $s$ is an \emph{exact} subsequence of $S$. We can allow for some errors that will be removed during the subsequent error correction step.
Using Hoeffding's inequality (Theorem 1 in Ref.~\cite{Hoeffding63}) we can give a non-tight (but exponential) lower bound on $\operatorname{Prob}\{\tilde{s}\preccurlyeq S\}$ if we allow for approximately $k$ errors in the resulting subsequence $\tilde{s}$:

\begin{equation}
\operatorname{Prob}\{\tilde{s}\preccurlyeq S\}\ge 1-\exp\left(-\frac{2k^2}{n}\right) \Leftarrow \tilde{m}=\lfloor n/2\rfloor-k.
\end{equation}
Here, only $\tilde{m}$ bits of $s$ form a subsequence of $S$.
For moderate values of $k$ this probability reaches almost unity.

\section{Universal hash functions and proof of Theorem~\ref{thm:fix}}
\label{uhf}

In the following we give definitions of (Almost) Universal$_2$ and
Strongly Universal$_2$ hash function families; see e.g.,
\cite{Stinson92}.
\begin{definition}[$\epsilon$-Almost Universal$_2$ ($\epsilon$-AU$_2$) 
  hash functions]
  Let $\mM$ and $\mT$ be finite sets. A family $\mH$ of hash functions
  from $\mM$ to $\mT$ is \emph{$\epsilon$-Almost Universal$_2$} if
  there exist at most $\epsilon|\mH|$ hash functions $h\in\mH$ such
  that $h(m_1) = h(m_2)$ for any two distinct $m_1, m_2 \in
  \mM$. \medskip

  \noindent If $\epsilon=1/|\mT|$, then $\mH$ is called
  \emph{Universal$_2$ (U$_2$)}.
\end{definition}

\begin{definition}[$\epsilon$-Almost Strongly Universal$_2$
  ($\epsilon$-ASU$_2$) hash functions]
  \label{def:ASU2}
  Let $\mM$ and $\mT$ be finite sets. A family $\mH$ of hash functions
  from $\mM$ to $\mT$ is $\epsilon$-ASU$_2$ if the following two
  conditions are satisfied:
  \begin{enumerate}[(a)]
  \item The number of hash functions in $\mH$ that takes
    an arbitrary $m_1 \in \mM$ to an arbitrary $t_1 \in \mT$ is
    exactly $|\mH|/|\mT|$.
  \item The fraction of those functions that also takes an
    arbitrary $m_2 \neq m_1$ in $\mM$ to an arbitrary $t_2 \in \mT$
    (possibly equal to $t_1$) is at most $\epsilon$.
  \end{enumerate}
  \noindent If $\epsilon=1/|\mT|$, then $\mH$ is called \emph{Strongly
    Universal$_2$ (SU$_2$)}.
\end{definition}

Below, we have restated Theorem~\ref{thm:fix} together with its
proof. This theorem states that the composition of a hash function
family with an SU$_2$ family will form an ASU$_2$ family if and only
if the first family in the composition is AU$_2$.  The ``if'' part
follows from the composition theorem \cite{Stinson92}, but the below proof
simultaneously handles ``if and only if''.

\setcounter{theorem}{0}
\begin{theorem}
  Let $\mM$, $\mZ$ and $\mT$ be finite sets. Let $\mF$ be a family of
  hash functions from $\mM$ to $\mZ$, $\mH$ a family of SU$_2$ hash
  functions from $\mZ$ to $\mT$, and $\mG := \mH\circ\mF$, where
  $\circ$ stands for element-wise composition. Then $\mG$ is
  $\epsilon$-ASU$_2$ if and only if $\mF$ is $\epsilon'$-AU$_2$, where
  $\epsilon=\epsilon'(1-1/|\mT|)+1/|\mT|$.
\end{theorem}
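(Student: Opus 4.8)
The plan is to verify the two defining conditions of $\epsilon$-ASU$_2$ for $\mG$ directly by counting, exploiting that the SU$_2$ hypothesis on $\mH$ yields \emph{exact} counts rather than mere upper bounds; this exactness is precisely what upgrades the argument from a one-directional ``if'' into an ``if and only if''. Throughout I regard $\mG$ as indexed by pairs $(f,h)\in\mF\times\mH$, so that $|\mG|=|\mF|\,|\mH|$ and the member indexed by $(f,h)$ is $h\circ f$.

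First I would dispose of condition (a) of $\epsilon$-ASU$_2$, which holds unconditionally. Fix $m_1\in\mM$ and $t_1\in\mT$. For each $f\in\mF$ the image $f(m_1)=:z_1$ is a single element of $\mZ$, and by the SU$_2$ property (a) of $\mH$ exactly $|\mH|/|\mT|$ functions $h$ satisfy $h(z_1)=t_1$, independently of which $z_1$ occurs. Summing over the $|\mF|$ choices of $f$ gives exactly $|\mF|\,|\mH|/|\mT|=|\mG|/|\mT|$ members of $\mG$ sending $m_1$ to $t_1$, establishing condition (a). This count is also what makes the ``fraction'' appearing in condition (b) well-defined.

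The heart of the argument is condition (b). Fix distinct $m_1,m_2\in\mM$ and targets $t_1,t_2\in\mT$, and let $C=|\{f\in\mF:\ f(m_1)=f(m_2)\}|$ be the collision count of this pair under $\mF$. I would split the sum over $f$ into the collision set and its complement. For $f$ with $f(m_1)\neq f(m_2)$, the SU$_2$ property gives \emph{exactly} $|\mH|/|\mT|^2$ functions $h$ with $h(f(m_1))=t_1$ and $h(f(m_2))=t_2$. For $f$ with $f(m_1)=f(m_2)$, there are $|\mH|/|\mT|$ such $h$ when $t_1=t_2$ and none when $t_1\neq t_2$. Dividing the resulting total $N$ by $|\mG|/|\mT|$, a short computation shows the fraction equals $(C/|\mF|)(1-1/|\mT|)+1/|\mT|$ when $t_1=t_2$, and equals $(1-C/|\mF|)/|\mT|$ when $t_1\neq t_2$; the two expressions differ by exactly $C/|\mF|\ge 0$, so the case $t_1=t_2$ is the larger.

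Finally I would observe that the supremum of this fraction over $t_1,t_2$ is attained at $t_1=t_2$, and over pairs $(m_1,m_2)$ at the pair maximizing $C$. Hence the largest fraction occurring for $\mG$ equals $(C_{\max}/|\mF|)(1-1/|\mT|)+1/|\mT|$, where $C_{\max}$ is the maximal collision count over all distinct pairs; note that $C_{\max}/|\mF|$ is exactly the smallest value of $\epsilon'$ for which $\mF$ is $\epsilon'$-AU$_2$. Since $c\mapsto c(1-1/|\mT|)+1/|\mT|$ is strictly increasing in $c$ for $|\mT|>1$, the bound $\epsilon=\epsilon'(1-1/|\mT|)+1/|\mT|$ holds for $\mG$ if and only if $C_{\max}/|\mF|\le\epsilon'$, i.e.\ if and only if $\mF$ is $\epsilon'$-AU$_2$; this monotone equivalence delivers both directions at once. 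The only step demanding care is keeping the cases $t_1=t_2$ and $t_1\neq t_2$ separate and confirming that the former dominates, since it is that case which pins down $\epsilon'$; everything else is bookkeeping made possible by the exactness of the SU$_2$ counts.
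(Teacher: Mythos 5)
Your proposal is correct and follows essentially the same route as the paper's proof: condition (a) via the exact SU$_2$ count summed over $f$, and condition (b) via splitting the sum over $f$ into the collision set and its complement, yielding the same formula $(C/|\mF|)(1-1/|\mT|)+1/|\mT|$ whose exactness delivers both directions. The only (welcome) refinement is that you make explicit the check that the $t_1=t_2$ case dominates the $t_1\neq t_2$ case, which the paper leaves implicit in its $\delta_{t_1,t_2}$ formula.
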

\begin{proof}
  For $\mG$ to be $\epsilon$-ASU$_2$, there are two requirements
  (Definition~\ref{def:ASU2}). The first, on $|\{g: g(m)=t\}|$, needs
  no properties of $\mF$, because, for any $m\in\mM$ and $t\in\mT$,
  \begin{equation}\label{eq:asu-cond-1}
  \begin{aligned}
  |\{g: g(m)=t\}| &= \sum_z|\{f:f(m)=z\}||\{h:h(z)=t\}|\\
                  &= \sum_z|\{f:f(m)=z\}|\frac{|\mH|}{|\mT|}
                   = |\mF|\frac{|\mH|}{|\mT|}=\frac{|\mG|}{|\mT|}.
  \end{aligned} 
  \end{equation}
  The second requirement is a bound for   
  \begin{equation}\label{eq:asu-cond-2}
    \begin{aligned}
      |\{g&:g(m_1)=t_1, g(m_2)=t_2\}|\\
      &= \sum_z |\{f:f(m_1)=f(m_2)=z\}||\{h:h(z)=t_1,h(z)=t_2\}|\\
      &\quad + \sum_{z_1\neq z_2} |\{f:f(m_1)=z_1,f(m_2)=z_2\}|
      |\{h:h(z_1)=t_1,h(z_2)=t_2\}|,
    \end{aligned}
  \end{equation}
  for any two distinct $m_1$, $m_2\in\mM$.  If $t_1\neq t_2$, the
  first term above will be zero because $h(z)$ will never equal both
  $t_1$ and $t_2$. If instead $t_1=t_2=t$, the first term simplifies
  to
  \begin{equation}\label{eq:1}
    \begin{aligned}
      \sum_z |\{f:f(m_1)=f(m_2)=z\}||\{h:h(z)=t\}| =
      |\{f:f(m_1)=f(m_2)\}|\frac{|\mH|}{|\mT|}.
    \end{aligned}
  \end{equation}  
  The second term is
  \begin{equation}\label{eq:2}
    \begin{aligned}
      \sum_{z_1\neq z_2} |\{f:f(m_1)=z_1,f(m_2)=z_2\}|
      |\{h&:h(z_1)=t_1,h(z_2)=t_2\}|\\
      &= \big(|\mF|-|\{f:f(m_1)=f(m_2)\}|\big)\frac{|\mH|}{|\mT|^2}
    \end{aligned}
  \end{equation}
  and this can be bounded by $|\mG|/|\mT|^2$ only using properties of
  $\mH$.  Thus, if $t_1=t_2$ the first term needs a bound for
  collisions within $\mF$, while the second only needs properties of
  $\mH$, and we obtain
  \begin{equation}\label{eq:asu-cond-2n}
    |\{g:g(m_1)=t_1, g(m_2)=t_2\}|
    = |\{f:f(m_1)=f(m_2)\}|\Big(\delta_{t_1,t_2}-\frac{1}{|\mT|}\Big)
    \frac{|\mH|}{|\mT|} + \frac{|\mG|}{|\mT|^2},
  \end{equation}
  where $\delta_{t_1,t_2}=1$ if $t_1=t_2$ and $0$ otherwise.  This
  makes the second requirement on $\mG$ equivalent to $\mF$ being
  $\epsilon'$-AU$_2$:
  \begin{equation}\label{eq:iff}
    \begin{aligned}
      |\{g:g(m_1)=t_1, g(m_2)=t_2\}| &\le \epsilon\frac{|\mG|}{|\mT|}=
      \epsilon'\Big(1-\frac{1}{|\mT|}\Big)
      \frac{|\mG|}{|\mT|} + \frac{|\mG|}{|\mT|^2}\\
      {\centering \Longleftrightarrow} & \\
      |\{f:f(m_1)=f(m_2)\}| &\le \epsilon'|\mF|.
    \end{aligned}
  \end{equation} 
\end{proof}
\end{appendix}


\end{document}